\newtheorem{theorem}{Theorem}
\newtheorem{lemma}{Lemma}
\newtheorem{proposition}{Proposition}
\newtheorem{corollary}{Corollary}
\newtheorem{definition}{Definition}
\newcommand{\Psr}{\mathcal{P}_{\rm{sr}}}
\newcommand{\Prd}{\mathcal{P}_{\rm{rd}}}
\newcommand{\Psrbar}{\overline{\mathcal{P}}_{\rm{sr}}}
\newcommand{\Prdbar}{\overline{\mathcal{P}}_{\rm{rd}}}
\newcommand{\Pout}{\mathcal{P}_{\rm{out}}}
\begin{document}
\title{Improper Gaussian Signaling in Full-Duplex Relay Channels with Residual Self-Interference\vspace{-4mm}}
\author{\IEEEauthorblockN{Mohamed Gaafar, Mohammad Galal Khafagy, Osama Amin, and Mohamed-Slim Alouini
\thanks{The research reported in this publication was supported by funding from King Abdullah University of Science and Technology (KAUST).}}\\
%\vspace{0.3cm}
\IEEEauthorblockA{Computer, Electrical, and Mathematical Sciences and Engineering (CEMSE) Division\\
King Abdullah University of Science and Technology (KAUST), Thuwal, Makkah Province, Saudi Arabia.\\
Email: \{mohamed.gaafar, mohammad.khafagy, osama.amin, slim.alouini\}@kaust.edu.sa\}@kaust.edu.sa%\vspace{5mm} 
}
}
\maketitle
\pagenumbering{gobble}
\begin{abstract}
We study the potential employment of improper Gaussian signaling (IGS) in full-duplex cooperative settings with residual self-interference (RSI). IGS is recently shown to outperform traditional proper Gaussian signaling (PGS) in several interference-limited channel settings. In this work, IGS is employed in an attempt to alleviate the RSI adverse effect in full-duplex relaying (FDR). To this end, we derive a tight upper bound expression for the end-to-end outage probability in terms of the relay signal parameters represented in its power and circularity coefficient. We further show that the derived upper bound is either monotonic or unimodal in the relay's circularity coefficient. This result allows for easily locating the global optimal point using known numerical methods. Based on the analysis, IGS allows FDR systems to operate even with high RSI. It is shown that, while the communication totally fails with PGS as the RSI increases, the IGS outage probability approaches a fixed value that depends on the channel statistics and target rate. The obtained results show that IGS can leverage higher relay power budgets than PGS to improve the performance, meanwhile it relieves its RSI impact via tuning the signal impropriety.
\end{abstract}
%\begin{keywords}
%full-duplex relay, improper Gaussian signaling, self-interference.
%\end{keywords}

%%%%%%%%%%%%%%%%%%%%%%%%%%%%%%%%%%%%%%%%%%%%%%%%%%%%%%%%%%%%
%%%%%%%%%%%%%%%%%%%%%%%%%%%          Introduction        %%%%%%%%%%%%%%%%%%%%%%
%%%%%%%%%%%%%%%%%%%%%%%%%%%%%%%%%%%%%%%%%%%%%%%%%%%%%%%%%%%%
\section{Introduction}
%\subsection{Motivation}
Contrary to a long-held acceptance that radio front-ends cannot simultaneously transmit and receive, a truly promising potential for \emph{full-duplex communications} has been shown by recent hardware developments \cite{mobicom2011fullduplex},\cite{201212_TWC_Duarte}. Indeed, by multiplexing inbound and outbound traffic over the same channel resource, a full-duplex radio can recover the spectral efficiency loss known to be encountered by its half-duplex counterpart. Performance merits of full-duplex radio have been recently investigated in different communication settings, including full-duplex bidirectional communication, full-duplex base stations, and full-duplex relaying (FDR) \cite{201409_JSAC_FD_Tutorial}, with the latter being the focus of this work.  These merits have qualified full-duplex communication to be considered as a candidate technology for future fifth generation (5G) wireless networks \cite{201402_COMMAG_5G_full_duplex}. 

FDR allows a relay node to listen to an information source and simultaneously forward to its intended destination. Theoretically, this simultaneous transmission/reception doubles the spectral efficiency in the relay channel. However, in practice, this comes at the expense of a self-interference level introduced at the receiver of the relay node from its own transmitter. Even with the application of advanced self-interference isolation and cancellation techniques, a level of residual self-interference (RSI) persists. Such a persistent RSI link and the means to mitigate it represent the main challenge in full-duplex communications, especially with the fact that its adverse effect can typically be an increasing function of the relay power. Therefore, increasing the relay power no longer guarantees an enhanced end-to-end performance. For instance, by increasing the relay power in a fixed-rate transmission scheme, the relay may forward more reliably to the destination in the second hop. However, it also increases the RSI level which negatively affects the reliability in the first hop. Hence, higher relay power budgets cannot be always utilized beyond a certain threshold. Consequently, employing interference mitigation strategies in FDR is crucial to accomplish a satisfactory performance of full-duplex transmissions.

Improper Gaussian signaling (IGS) has been recently studied in \cite{cadambe2010interference}, where higher degrees of freedom for the $3$-user single-input single-output interference channel were shown to be achievable. This comes in contrary to other communication settings with interference-free channels where proper Gaussian signaling (PGS) is the optimal choice. PGS assumes the zero-mean complex Gaussian transmit signal to be statistically circularly symmetric with uncorrelated real and imaginary components. On the other hand, IGS is a general class of signals where circularity and uncorrelatedness conditions can be relaxed \cite{schreier2010statistical}. The results in \cite{cadambe2010interference} motivate the need to further study the potential gains of IGS in communication scenarios where interference imposes a noticeable limitation. 

%\subsection{Related Work}
For Gaussian channels, and within the class of Gaussian signals, IGS has been recently adopted to improve the performance of different interference channel communication systems \cite{zeng2013transmit, amin2015outage, Gaafar2015Spectrum, gaafar2015sharing}. In general, recent work on the interference channel showed that IGS can actually support higher rates in certain interference-limited scenarios \cite{zeng2013transmit}. In an underlay cognitive radio setting, IGS increases the spectrum sharing opportunity for secondary users by relieving the interference impact on the authorized users \cite{amin2015outage, Gaafar2015Spectrum, gaafar2015sharing}.  

%\subsection{Related Work}
%In an attempt to \emph{statistically} mitigate the effect of the RSI link on the end-to-end performance, we examine the potential merits of allowing the relay to more generally use \emph{improper Gaussian signaling (IGS)} \cite{schreier2010statistical}. Proper Gaussian signaling (PGS) assume that zero-mean complex transmit signals are statistically circularly symmetric with uncorrelated real and imaginary components. On the other hand, IGS is a more general class of signals where circularity and uncorrelatedness conditions can be relaxed, subsuming PGS as a special case. For Gaussian channels, and within the class of Gaussian signals, PGS has been widely accepted and adopted in the literature as the best Gaussian signaling scheme for different communication settings. This common assumption was primarily motivated by the fact that PGS is known to maximize the achievable rates in point-to-point, multiple access, and broadcast channels. However, recent work on the interference channel showed that, in general, IGS can actually support higher rates in certain interference-limited scenarios \cite{zeng2013transmit}. These results motivate the need to further study the potential gains of IGS in communication scenarios where interference imposes a noticeable limitation. 

The potential gains of IGS have been also recently studied in \cite{Hellings2014OnOptimal} for the multiple-input multiple-output (MIMO) relay channel when a partial decode-and-forward strategy is adopted. In such a relaying strategy, the relay only decodes a part of the message, while the rest of the message is treated as an additional interference term. It was shown in \cite{Hellings2014OnOptimal} that PGS is optimal within the class of Gaussian signals. However, the work in \cite{Hellings2014OnOptimal} assumed an ideal full-duplex relay channel, where the self-interference imposed by the relay's transmitter on its own receiver is perfectly canceled.

%\subsection{Contribution}
In this work, we investigate the potential gains of IGS in decode-and-forward FDR with RSI. This work is the first to study IGS in FDR settings with imperfect self-interference cancellation to the best of the authors' knowledge. First, we assume the relay transmits with IGS and derive a closed-form upper bound on the end-to-end outage probability using Jensen's inequality. We compare the derived outage upper bound of IGS to the exact outage probability of PGS from the literature. Also, we show that depending on the channel parameters and transmission rate, the derived upper bound is either monotonic or unimodal in the relay's circularity coefficient, which allows for locating the global optimal value using simple known numerical methods. Through numerical optimization, we show that the use of IGS can yield an outage upper bound that is less than the exact outage probability of PGS. We also numerically validate the derived upper bound by comparing it to the well-matching numerically computed exact outage probability.

The rest of the paper is organized as follows. The system model is detailed in section \ref{sec:sys_mod}. In section \ref{sec:analysis}, a closed-form expression for the outage probability upper bound is derived.  The presented theoretical results are numerically verified in section \ref{sec:results}, with discussions on the potential gains of IGS over conventional PGS. Finally, conclusions are drawn in section \ref{sec:conc}. Lengthy proofs are deferred to the appendices.
%%%%%%%%%%%%%%%%%%%%%%%%%%%%%%%%%%%%%%%%%%%%%%%%%%%%%%%%%%%%
%%%%%%%%%%%%%%%%%%%%%%%%%%%        System Model        %%%%%%%%%%%%%%%%%%%%%%
%%%%%%%%%%%%%%%%%%%%%%%%%%%%%%%%%%%%%%%%%%%%%%%%%%%%%%%%%%%%
%
\section{System Model}\label{sec:sys_mod}
We consider the communication setting depicted in Fig. \ref{sysmodfig}, where a source (${\rm S}$) intends to communicate with a distant destination (${\rm D}$). The direct ${\rm S -  D}$ link is assumed of a relatively weak gain due to path loss and shadowing effects. Accordingly, a full-duplex relay (${\rm R}$) is utilized to assist the end-to-end communication and extend the coverage.  FDR can offer higher spectral efficiency when compared to its half-duplex counterpart. However, FDR in practice suffers from a RSI level which imposes an additional communication challenge. In addition, the received signal component via the ${\rm S -  D}$ link is assumed to be weak and hence, it is considered as interference at the destination. Thus, the FDR system under consideration suffers from two interference sources; the RSI at the relay, and the direct ${\rm S -  D}$ link signal received at the destination. This model has been widely studied for PGS in the literature, and the end-to-end outage probability is derived in closed-form in \cite{kwon2010optimal}.
\begin{figure}[!t]
\centering
\includegraphics[width=0.9\columnwidth]{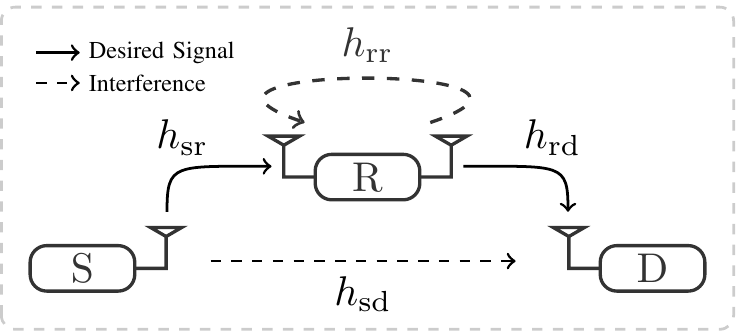}
%\vspace{-2mm}
\caption{A full-duplex cooperative setting in coverage extension scenarios.}
\label{sysmodfig}
%\vspace{-4mm}
\end{figure}
\subsection{Channel Model}\label{subsec:chan_model}
The fading coefficient of the $i-j$ link is denoted by $h_{ij}$, for $i\in\{\rm{s},\rm{r}\}$ and $j\in\{\rm{r},\rm{d}\}$, where $\rm{s}$, $\rm{r}$ and $\rm{d}$ refer to the source, relay, and destination nodes, respectively. Moreover, the $i-j$ link gain is denoted by $g_{ij}=|h_{ij}|^2$, where $\left|.\right|$ is the absolute value. All channels are assumed to follow a block fading model, where $h_{ij}$ remains constant over one block, and varies independently from one block to another following a Rayleigh fading model with average gain $\mathbb{E}\left\{|h_{ij}|^2\right\}=\pi_{ij}$, with $\mathbb{E}\{.\}$ denoting the expectation operator. Accordingly, the channel gain, $g_{ij}$,  is an exponential random variable with mean parameter $\pi_{ij}$. All channel fading gains are assumed to be mutually independent.

The relay operates in a full-duplex mode where simultaneous listening/forwarding is allowed with an introduced level of loopback interference. The link coefficient $h_{\rm{rr}}$ is assumed to represent the RSI after undergoing all possible isolation and cancellation techniques \cite{kwon2010optimal,Riihonen2011Hybrid, khafagy2013outage}. The source and the relay powers are denoted by $P_{\rm{s}}$ and $P_{\rm{r}}$, respectively, where both are restricted to a maximum allowable value of $P_{\mathrm{max}}$. Also, $n_{\rm{r}}$ and $n_{\rm{d}}$ denote the circularly-symmetric complex additive white Gaussian noise components at the relay and the destination, with variance $\sigma_{\rm{r}}^2$ and $\sigma_{\rm{d}}^2$, respectively. Without loss of generality, we assume that $\sigma_{\rm{r}}^2=\sigma_{\rm{d}}^2=1$. %Finally, we assume that $\sigma_r^2=\sigma_d^2=1$, while all transmit powers are absorbed into the channel coefficients for ease of notation.
\subsection{Signal Model}
For ease of exposition, we assume a PGS at the source, $x_{\rm{s}}[t]$, at time $t$. On the other hand, the relay uses a zero-mean IGS, $x_{\mathrm{r}}[t]$, to mitigate the non-negligible RSI at the receiver of the relay. The degree of impropriety of $x_{\mathrm{r}}[t]$ is measured based on the following definitions. 
\begin{definition}\label{def:1}
The variance and pseudo-variance of the relay's transmit signal, $x_{\mathrm{r}}[t]$, are given by $\sigma _x^2={\mathbb{E}} {{{\{\left| x_{\rm{r}} \right|}^2\}}}$ and $\tilde\sigma _x^2=\mathbb{E} {{\{{ x_{\rm{r}}^2 }\}}}$, respectively \rm{\cite{Neeser1993proper}}. 
\end{definition}
\begin{definition}\label{def:2}
A signal is called \emph{proper} if it has a zero $\tilde\sigma _x^2$, while an \emph{improper} signal has a non-zero $\tilde\sigma _x^2$. 
\end{definition}
\begin{definition}\label{def:3}
A circularity coefficient is a measure of the degree of impropriety of the signal $x_{\mathrm{r}}[t]$, which is given as
\begin{equation}\label{C_x}
{{\cal C}_x} = \frac{{\left| {\tilde \sigma _x^2} \right|}}{{\sigma _x^2}}.
\end{equation} 
\end{definition}
\noindent \noindent Following from Definition \ref{def:2}, $\mathcal{C}_x=0$ implies a \emph{proper} signal, while $\mathcal{C}_x=1$ implies a \emph{maximally improper} signal.

 The received signals at the relay and the destination at time $t$ are given, respectively, by
\begin{eqnarray}
y_{\rm{r}}[t]	
&=&
\sqrt{P_{\rm{s}}} h_{\rm{sr}} x_{\rm{s}}[t]+ \sqrt{P_{\rm{r}}} h_{\rm{rr}}x_{\rm{r}}[t] + n_{\rm{r}}[t],\label{relay_rec_signal}\\
y_{\rm{d}}[t]	
&=&
\sqrt{P_{\rm{r}}} h_{\rm{rd}} x_{\rm{r}}[t]+ \sqrt{P_{\rm{s}}} h_{\rm{sd}}x_{\rm{s}}[t] + n_{\rm{d}}[t].\label{dest_rec_signal}
\end{eqnarray}
The relay is assumed to adopt a decode-and-forward relaying strategy, where it does not transmit any message of its own, but forwards the regenerated source message after decoding. Due to the source and relay asynchronous transmissions, the signal transmitted by the relay (source) is considered as an additional noise term at the relay (destination) in the decoding stage as commonly treated in the related literature \cite{kwon2010optimal,Riihonen2011Hybrid}. 
\subsection{Achievable Rates}
From the adopted signal model in \eqref{relay_rec_signal} and \eqref{dest_rec_signal}, each transmit signal, i.e., from the source and the relay transmitter, is considered as a desired signal at one receiver while treated as interference at the other. Hence, the rate expressions for the first and second hops have the same form of those of a two-user interference channel. As a result of using IGS, the achievable rate supported by the ${\rm S -  R}$ link can be expressed ~as \cite{zeng2013transmit}:
\begin{align}\label{R_sr_zeng}
\!\!\!{\mathcal{R}_{\rm{sr}}}\!\left( { {P_{\rm{r}}},{{\cal C}_x}} \right) &={\log _2}\left(\!{1 + \frac{{{P_{\rm{s}}}{g_{\rm{sr}}}}}{{{P_{\rm{r}}}{g_{\rm{rr}}} + 1}}}\!\right)\!+\!\frac{1}{2}{\log _2}\left( {\frac{{1 - {\cal C}_{{{{y}}_{\rm{r}}}}^2}}{{1 - {\cal C}_{{{{I}}_{\rm{r}}}}^2}}} \right),\!\!
\end{align} 
where ${\cal C}_{{{{y}}_{\rm{r}}}}$ and ${\cal C}_{{{{I}}_{\rm{r}}}}$ are the circularity coefficients of the received and
interference-plus-noise signals at the relay, respectively, which are given by
\begin{align}
{{\cal C}_{{{{y}}_{\rm{r}}}}} = \frac{{{P_{\rm{r}}}{g_{\rm{rr}}}{{\cal C}_x}}}{{{P_{\rm{s}}}{g_{\rm{sr}}} + {P_{\rm{r}}}{g_{\rm{rr}}} + 1}},\;\;
{{\cal C}_{{{I}_{\rm{r}}}}} = \frac{{{P_{\rm{r}}}{g_{\rm{r}}}{{\cal C}_x}}}{{{P_{\rm{r}}}{g_{\rm{rr}}} + 1}}.
\end{align}
Hence, \eqref{R_sr_zeng} can be simplified as
\begin{align}\label{R_sr_zeng_simplified}
\!\!\!\!\!\!{\mathcal{R}_{\rm{sr}}}\!\left( { {P_{\rm{r}}},{{\cal C}_x}} \right) \!&=\!\frac{1}{2} {\log _2}\!\left(\!{\frac{{{{\left( {{P_{\rm{s}}}{g_{\rm{sr}}}+ {P_{\rm{r}}}{g_{\rm{\rm{rr}}}} + 1} \right)}^2}\!-\!{{\left( {{P_{\rm{r}}}{g_{\rm{rr}}}{{\cal C}_x}} \right)}^2}}}{{{{\left( {{P_{\rm{r}}}{g_{\rm{rr}}} + 1} \right)}^2}\!-\!{{\left( {{P_{\rm{r}}}{g_{\rm{rr}}}{{\cal C}_x}} \right)}^2}}}}\!\right)\!.\!\!\!
\end{align}
Similarly, the achievable rate supported by the ${\rm R -  D}$ link is given by
\begin{align}\label{R_rd_zeng_simplified}
\!\!\!\!\!\!{\mathcal{R}_{\rm{rd}}}\!\left( { {P_{\rm{r}}},{{\cal C}_x}} \right) \!&=\!\frac{1}{2}{\log _2}\!\left(\!{\frac{{{{\left( {{P_{\rm{r}}}{g_{\rm{rd}}}\!+\!{P_{\rm{s}}}{g_{\rm{sd}}} + 1} \right)}^2}\!-\!{{\left( {{P_{\rm{r}}}{g_{\rm{rd}}}{{\cal C}_x}} \right)}^2}}}{{{{\left( {{P_{\rm{s}}}{g_{\rm{sd}}} + 1} \right)}^2}}}}\!\right)\!.\!\!\!
\end{align}    
One can observe that if $\mathcal{C}_x=0$, we obtain the well known expressions of the achievable rates of PGS.
%%%%%%%%%%%%%%%%%%%%%%%%%%%%%%%%%%%%%%%%%%%%%%%
%%%%%%%%%%%%%%%         Performance Analysis       %%%%%%%%%%%%%%%%
%%%%%%%%%%%%%%%%%%%%%%%%%%%%%%%%%%%%%%%%%%%%%%%
\section{Outage Performance Analysis}\label{sec:analysis}
In this section, we analyze the outage performance of the canonical cooperative setting depicted in Fig. \ref{sysmodfig} when IGS is allowed at the relay. The end-to-end outage probability is given by
\begin{eqnarray}\label{p_out_overall}
\Pout &=& 1- \Psrbar~\Prdbar,
\end{eqnarray}
where $\Psr$ and $\Prd$ denote the outage probability in the ${\rm S - R}$ and the ${\rm R - D}$ links, respectively, while $\overline{\mathcal{P}}_{ij}=1-\mathcal{P}_{ij}$. In what follows, we derive the outage probability expressions in the individual links, i.e., $\Psr$ and $\Prd$.
\subsection{Outage Probability of ${\rm S - R}$ Link}
 Let $R$ denote the target rate of the ${\rm S - R}$ link, then its outage probability is defined as
 \begin{equation}\label{p_out_sr_prob}
\Psr\left( { {P_{\rm{r}}},{{\cal C}_x}} \right) = \mathbb{P}\left\{ {{\mathcal{R}_{\rm{sr}}}\left( { {P_{\rm{r}}},{{\cal C}_x}} \right) < R} \right\},
 \end{equation}
where $\mathbb{P}\left\{A\right\}$ denotes the probability of occurence of the event $A$. To this end, the {\rm S - R} link outage probability can be obtained from Lemma \ref{lemma1}.
\begin{lemma}\label{lemma1}
In an FDR cooperative system with IGS employed at the relay, the outage probability of the ${\rm S - R}$ link with a target rate $R$ is given by 
\begin{align}
\!\!\!\Psr\left( {{P_{\rm{r}}},{{\cal C}_x}} \right) %&= 1 - \mathbb{E}{_{{g_{\rm{rr}}}}}\left\{ {{e^{ - \frac{{\left( {{P_{\rm{r}}}{g_{\rm{rr}}} + 1} \right)}}{{{P_{\rm{s}}}{\pi _{\rm{sr}}}}}\Psi \left( {\frac{{{P_{\rm{r}}}{g_{\rm{rr}}}{{\cal C}_x}}}{{{P_{\rm{r}}}{g_{\rm{rr}}} + 1}}} \right)}}} \right\} \nonumber \\%
& = 1 - \frac{1}{{{\pi _{\rm{rr}}}}}\int\limits_0^\infty  {{e^{ - \Big( \frac{x}{{{\pi _{\rm{rr}}}}}+\frac{{\left( {{P_{\rm{r}}}x + 1} \right)}}{{{P_{\rm{s}}}{\pi _{\rm{sr}}}}}\Psi \left( {\frac{{{P_{\rm{r}}}x{{\cal C}_x}}}{{{P_{\rm{r}}}x + 1}}} \right)\Big)}}} dx,\label{P_exact_integral_IGS_first_hop}
\end{align}
\rm{where}
\begin{equation}
\Psi \left( x \right) =  {\sqrt {1 + \gamma \left( {1 - {x^2}} \right)}  - 1},
\end{equation}
and $\gamma = {{2^{2R}} - 1}$. 
\end{lemma}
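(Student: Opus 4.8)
The plan is to start from the definition \eqref{p_out_sr_prob} of $\Psr$ and the simplified rate formula \eqref{R_sr_zeng_simplified}, collapse the outage event $\{\mathcal{R}_{\rm{sr}}(P_{\rm r},\mathcal{C}_x)<R\}$ into a single threshold condition on the source--relay channel gain $g_{\rm sr}$, and then take expectations over the two independent exponential gains $g_{\rm rr}$ and $g_{\rm sr}$.

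First I would substitute \eqref{R_sr_zeng_simplified} into \eqref{p_out_sr_prob}; since $\mathcal{R}_{\rm{sr}}<R$ is equivalent to the argument of the logarithm being below $2^{2R}=\gamma+1$, this reads
\begin{equation*}
\bigl(P_{\rm s}g_{\rm sr}+P_{\rm r}g_{\rm rr}+1\bigr)^2-\bigl(P_{\rm r}g_{\rm rr}\mathcal{C}_x\bigr)^2<(\gamma+1)\Bigl[\bigl(P_{\rm r}g_{\rm rr}+1\bigr)^2-\bigl(P_{\rm r}g_{\rm rr}\mathcal{C}_x\bigr)^2\Bigr].
\end{equation*}
This cross-multiplication is valid because $0\le\mathcal{C}_x\le1$ forces $\bigl(P_{\rm r}g_{\rm rr}+1\bigr)^2-\bigl(P_{\rm r}g_{\rm rr}\mathcal{C}_x\bigr)^2>0$. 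Expanding $\bigl(P_{\rm s}g_{\rm sr}+(P_{\rm r}g_{\rm rr}+1)\bigr)^2$ and rearranging reduces the condition to the quadratic inequality in $g_{\rm sr}$
\begin{equation*}
\bigl(P_{\rm s}g_{\rm sr}\bigr)^2+2P_{\rm s}g_{\rm sr}\bigl(P_{\rm r}g_{\rm rr}+1\bigr)-\gamma\Bigl[\bigl(P_{\rm r}g_{\rm rr}+1\bigr)^2-\bigl(P_{\rm r}g_{\rm rr}\mathcal{C}_x\bigr)^2\Bigr]<0.
\end{equation*}

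Next I would solve this quadratic for $g_{\rm sr}$. As $g_{\rm sr}\ge 0$, only the larger root is relevant; factoring $\bigl(P_{\rm r}g_{\rm rr}+1\bigr)^2$ out of the square root turns the inequality into
\begin{equation*}
g_{\rm sr}<\frac{P_{\rm r}g_{\rm rr}+1}{P_{\rm s}}\,\Psi\!\left(\frac{P_{\rm r}g_{\rm rr}\mathcal{C}_x}{P_{\rm r}g_{\rm rr}+1}\right),
\end{equation*}
with $\Psi$ exactly as in the statement; along the way one checks that the discriminant is non-negative and that the argument $P_{\rm r}g_{\rm rr}\mathcal{C}_x/(P_{\rm r}g_{\rm rr}+1)$ lies in $[0,1)$, so that $\Psi(\cdot)>0$ and the threshold is well defined. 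Conditioning on $g_{\rm rr}=x$ and using that $g_{\rm sr}$ is exponential with mean $\pi_{\rm sr}$ and independent of $g_{\rm rr}$ (Section \ref{subsec:chan_model}), the conditional outage probability is $1-\exp\!\bigl(-\tfrac{P_{\rm r}x+1}{P_{\rm s}\pi_{\rm sr}}\Psi(\tfrac{P_{\rm r}x\mathcal{C}_x}{P_{\rm r}x+1})\bigr)$. Averaging this against the exponential density $\tfrac{1}{\pi_{\rm rr}}e^{-x/\pi_{\rm rr}}$ of $g_{\rm rr}$ and using $\int_0^\infty \tfrac{1}{\pi_{\rm rr}}e^{-x/\pi_{\rm rr}}\,dx=1$ to split off the constant term yields \eqref{P_exact_integral_IGS_first_hop}.

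I expect the only real obstacle to be the algebraic reduction in the second paragraph: keeping track of the sign of the denominator so that the cross-multiplication is valid for every $\mathcal{C}_x\in[0,1]$, and picking the correct branch of the quadratic so that the square root collapses cleanly to $(P_{\rm r}g_{\rm rr}+1)\,\Psi(\cdot)$. Once the threshold on $g_{\rm sr}$ is obtained, the remaining averaging is routine under the Rayleigh fading and mutual-independence assumptions.
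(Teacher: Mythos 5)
Your proposal is correct and follows essentially the same route as the paper's own proof: substitute \eqref{R_sr_zeng_simplified} into \eqref{p_out_sr_prob}, reduce the outage event to the quadratic threshold $g_{\rm{sr}} < g_{\rm{sr}}^\circ = \frac{P_{\rm{r}}g_{\rm{rr}}+1}{P_{\rm{s}}}\,\Psi\bigl(\tfrac{P_{\rm{r}}g_{\rm{rr}}\mathcal{C}_x}{P_{\rm{r}}g_{\rm{rr}}+1}\bigr)$, compute the conditional outage via the exponential CDF of $g_{\rm{sr}}$, and average over the exponential density of $g_{\rm{rr}}$. Your added checks on the sign of the denominator and the choice of the positive root are sound and only make explicit what the paper leaves implicit.
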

\begin{proof}
By substituting \eqref{R_sr_zeng_simplified} in  \eqref{p_out_sr_prob}, we get
\begin{align}\label{p_out_inequality}
\Psr\left( { {P_{\rm{r}}},{{\cal C}_x}} \right) = &  \mathbb{P}\Big\{ P_{\rm s}^2g_{\rm{sr}}^2 + 2{P_{\rm{s}}}{g_{\rm{sr}}}\left( {{P_{\rm{r}}}{g_{\rm{rr}}} + 1} \right) - \nonumber \\
& \gamma \left( {{{\left( {{P_{\rm{r}}}{g_{\rm{rr}}} + 1} \right)}^2} - {{ {{P_{\rm{r}}^2}{g_{\rm{rr}}^2}{{\cal C}_x^2}} }}} \right) < 0 \Big\},
\end{align}
Hence, ${{\cal P}_{\rm{sr}}}$, conditioned on ${g_{\rm{rr}}}$, can be calculated as
\begin{equation}\label{p_out_sr_conditioned}
\Psr\left( { {P_{\rm{r}}},{{\cal C}_x}\left| {{g_{\rm{rr}}}} \right.} \right) = \frac{1}{{{\pi _{\rm{sr}}}}}\int\limits_0^{g_{\rm{sr}}^\circ } {{{\mathop{e}\nolimits} ^{ - \frac{x}{{{\pi _{\rm{sr}}}}}}}} dx = 1 - {e^{ - \frac{{g_{\rm{sr}}^\circ }}{{{\pi _{\rm{sr}}}}}}},
\end{equation}
where ${g_{\rm{sr}}^\circ }$ is the non-negative zero obtained by solving the inequality in \eqref{p_out_inequality} with respect to ${g_{\rm{sr}}}$ which can be written as
\begin{equation}
g_{\rm{sr}}^\circ  = \frac{\left( {{P_{\rm{r}}}{g_{\rm{rr}}} + 1} \right)}{P_{\rm{s}}}{\Psi}\left( {\frac{{{P_{\rm{r}}}{g_{\rm{rr}}}{{\cal C}_x}}}{{{P_{\rm{r}}}{g_{\rm{rr}}} + 1}}} \right),
\end{equation}
Therefore, by averaging over the statistics of $g_{\rm{rr}}$ in \eqref{p_out_sr_conditioned}, we obtain
\begin{equation}\label{p_out_sr_expectation}
\Psr\left( {{P_{\rm{r}}},{{\cal C}_x}} \right) = 1 - \mathbb{E}{_{{g_{\rm{rr}}}}}\left\{ {{e^{ - \frac{{\left( {{P_{\rm{r}}}{g_{\rm{rr}}} + 1} \right)}}{{{P_{\rm{s}}}{\pi _{\rm{sr}}}}}\Psi \left( {\frac{{{P_{\rm{r}}}{g_{\rm{rr}}}{{\cal C}_x}}}{{{P_{\rm{r}}}{g_{\rm{rr}}} + 1}}} \right)}}} \right\},
\end{equation}
which directly yields \eqref{P_exact_integral_IGS_first_hop}.
\end{proof}
Unfortunately, there is no closed-form expression for the integral in Lemma \ref{lemma1} except at $\mathcal{C}_x=0$, which gives the known PGS outage probability given in \cite{kwon2010optimal} as
\begin{align}\label{p_out_sr_C_0}
{{\cal P}_{{\rm{sr}}}}\left( {{P_{\rm{r}}},0} \right) = 1 - \frac{{{P_{\rm{s}}}{\pi _{{\rm{sr}}}}{e^{ - \frac{\eta }{{{P_{\rm{s}}}{\pi _{{\rm{sr}}}}}}}}}}{{{P_{\rm{s}}}{\pi _{{\rm{sr}}}}+{P_{\rm{r}}}{\pi _{{\rm{rr}}}}\eta  }},
\end{align}
where $\eta  = {2^R} - 1$. Otherwise, we resort to obtain an upper bound on the outage probability of the ${\rm S -  R}$ link as follows.
\begin{proposition}\label{prop_1}
The exponential term inside the expectation operator in \eqref{p_out_sr_expectation} is a convex function in $g_{\rm{rr}}$.
\end{proposition}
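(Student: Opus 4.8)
The plan is to reduce the claim to concavity of the exponent. Regarding the exponential term in \eqref{p_out_sr_expectation} as a function of $x := g_{\rm rr}\ge 0$, write it as $e^{-\phi(x)}$ where $\phi(x)=\frac{1}{P_{\rm s}\pi_{\rm sr}}\,(P_{\rm r}x+1)\,\Psi\!\left(\tfrac{P_{\rm r}x{\cal C}_x}{P_{\rm r}x+1}\right)$. If $\phi$ is concave on $[0,\infty)$, then $-\phi$ is convex and, since $\exp(\cdot)$ is convex and nondecreasing, the composition $e^{-\phi}$ is convex, which is exactly the assertion. So the whole proof comes down to showing that $\phi$ is concave.

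First I would clear the nested fraction and the square root in $\Psi$. Inserting $\Psi(u)=\sqrt{1+\gamma(1-u^2)}-1$ and multiplying the radicand by $(P_{\rm r}x+1)^2$, a short computation gives
\begin{equation}
(P_{\rm r}x+1)\,\Psi\!\left(\frac{P_{\rm r}x{\cal C}_x}{P_{\rm r}x+1}\right)=\sqrt{q(x)}-(P_{\rm r}x+1),\qquad q(x):=P_{\rm r}^2\big[1+\gamma(1-{\cal C}_x^2)\big]x^2+2(1+\gamma)P_{\rm r}x+(1+\gamma).
\end{equation}
Thus $\phi(x)=\frac{1}{P_{\rm s}\pi_{\rm sr}}\big[\sqrt{q(x)}-(P_{\rm r}x+1)\big]$ is a positive multiple of $\sqrt{q(x)}$ plus an affine term, so it suffices to show that $\sqrt{q}$ is concave on $[0,\infty)$. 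Here $q$ is a quadratic all of whose coefficients are nonnegative for ${\cal C}_x\in[0,1]$, with constant term $1+\gamma>0$, hence $q(x)>0$ for every $x\ge 0$ and $\sqrt{q}$ is smooth there.

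The concavity of $\sqrt{q}$ is then a one-line second-derivative check: with $q=\alpha x^2+\beta x+\delta$ one has $(\sqrt q)''=\big(2q''q-(q')^2\big)/(4q^{3/2})=(4\alpha\delta-\beta^2)/(4q^{3/2})$, and substituting $\alpha=P_{\rm r}^2[1+\gamma(1-{\cal C}_x^2)]$, $\beta=2(1+\gamma)P_{\rm r}$, $\delta=1+\gamma$ gives $4\alpha\delta-\beta^2=-4P_{\rm r}^2(1+\gamma)\gamma{\cal C}_x^2\le 0$. Hence $(\sqrt q)''\le 0$ on $[0,\infty)$, so $\phi$ is concave and $e^{-\phi}$ is convex. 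There is no serious obstacle in this argument; the only point that merits a line of care is verifying that $q>0$ throughout the ray $x\ge 0$, so that $\sqrt q$ is $C^2$ there and the second-derivative test is legitimate (this is immediate from the signs of the coefficients). An equivalent route that avoids invoking the composition rule is to compute $(e^{-\phi})''=\big[(\phi')^2-\phi''\big]e^{-\phi}$ directly and note that $\phi''\le 0$ forces this to be at least $(\phi')^2e^{-\phi}\ge 0$.
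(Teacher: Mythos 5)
Your proof is correct and follows essentially the same route as the paper: both rewrite the exponent as $\sqrt{q(x)}$ minus an affine term with $q$ a quadratic in $g_{\rm rr}$, establish concavity via the sign of the discriminant-type quantity $4\alpha\delta-\beta^2=-4P_{\rm r}^2(1+\gamma)\gamma{\cal C}_x^2\le 0$ (the paper writes this as $4AC-B^2\le 0$ because $1+\gamma(1-{\cal C}_x^2)\le 1+\gamma$), and conclude convexity of $e^{-\phi}$ by the convex-nondecreasing composition rule. Your added remarks on the positivity of $q$ over the ray and the direct second-derivative alternative are minor refinements of the same argument.
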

\begin{proof}
The proof is given in Appendix \ref{prop_1_proof}.
\end{proof}
Therefore, the upper bound on the ${\rm S -  R}$ link outage probability is given by the following lemma.
\begin{lemma}\label{lemma2} 
When the FDR is allowed to adopt IGS, the ${\rm S -  R}$ link outage probability in terms of the relay's transmit power and circularity coefficient is upper-bounded by
\begin{align}
{\cal P}_{\rm{sr}}^{\rm{UB}} \left( {{P_{\rm{r}}},{{\cal C}_x}} \right) = 1 - {e^{ - \frac{{ {{P_{\rm{r}}}{\pi _{\rm{rr}}} + 1} }}{{{P_{\rm{s}}}{\pi _{\rm{sr}}}}}\Psi \left( {\frac{{{P_r}{\pi _{rr}}{{\cal C}_x}}}{{{P_{\rm{r}}}{\pi _{\rm{rr}}} + 1}}} \right)}}.
\end{align}
\end{lemma}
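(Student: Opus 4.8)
The plan is to combine Proposition~\ref{prop_1} with Jensen's inequality. Since Lemma~\ref{lemma1} expresses $\Psr(P_{\rm r},\mathcal{C}_x)$ as $1-\mathbb{E}_{g_{\rm rr}}\{f(g_{\rm rr})\}$ where $f(x)=\exp\!\big(-\tfrac{P_{\rm r}x+1}{P_{\rm s}\pi_{\rm sr}}\Psi(\tfrac{P_{\rm r}x\mathcal{C}_x}{P_{\rm r}x+1})\big)$, and Proposition~\ref{prop_1} asserts that $f$ is convex in $g_{\rm rr}$, Jensen's inequality gives $\mathbb{E}_{g_{\rm rr}}\{f(g_{\rm rr})\}\ge f(\mathbb{E}\{g_{\rm rr}\})$. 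Therefore $\Psr = 1-\mathbb{E}_{g_{\rm rr}}\{f(g_{\rm rr})\} \le 1 - f(\mathbb{E}\{g_{\rm rr}\})$.

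Next I would substitute the first moment of the RSI link gain. Under the Rayleigh fading assumption of Section~\ref{subsec:chan_model}, $g_{\rm rr}$ is exponential with mean $\pi_{\rm rr}$, so $\mathbb{E}\{g_{\rm rr}\}=\pi_{\rm rr}$. Plugging $x=\pi_{\rm rr}$ into $f$ yields exactly
\begin{equation*}
1 - \exp\!\left(-\frac{P_{\rm r}\pi_{\rm rr}+1}{P_{\rm s}\pi_{\rm sr}}\,\Psi\!\left(\frac{P_{\rm r}\pi_{\rm rr}\mathcal{C}_x}{P_{\rm r}\pi_{\rm rr}+1}\right)\right),
\end{equation*}
which is the claimed ${\cal P}_{\rm sr}^{\rm UB}(P_{\rm r},\mathcal{C}_x)$. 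This is essentially the whole argument: state Jensen, apply convexity, evaluate at the mean.

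The main subtlety — already handled by Proposition~\ref{prop_1}, which I am allowed to invoke — is that convexity of $f$ is not obvious from inspection, because $\Psi$ is a concave (square-root-type) function while the composition with the affine-fractional argument and the outer exponential could in principle break convexity; the nontrivial work is deferred to Appendix~\ref{prop_1_proof}. Given that result, the only thing to be careful about here is direction: Jensen for a convex function bounds the expectation from \emph{below}, which in turn bounds the outage probability $1-\mathbb{E}\{f\}$ from \emph{above}, consistent with calling the result an upper bound. I would also note in passing that at $\mathcal{C}_x=0$ one can check this bound is consistent with, though looser than, the exact PGS expression in \eqref{p_out_sr_C_0}, serving as a sanity check rather than part of the proof.
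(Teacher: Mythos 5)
Your proposal is correct and follows exactly the paper's own route: invoke Proposition~\ref{prop_1} for convexity of the exponential term in \eqref{p_out_sr_expectation}, apply Jensen's inequality in the direction $\mathbb{E}\{f(g_{\rm rr})\}\ge f(\mathbb{E}\{g_{\rm rr}\})$, and evaluate at $\mathbb{E}\{g_{\rm rr}\}=\pi_{\rm rr}$. Your explicit check that the inequality direction yields an \emph{upper} bound on the outage probability, and the $\mathcal{C}_x=0$ sanity check against \eqref{p_out_sr_C_0}, are welcome additions but do not change the argument.
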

\begin{proof}
First, we follow Proposition 1, then by applying Jensen's inequality to the expectation in \eqref{p_out_sr_expectation}, we obtain the given outage probability upper bound.
\end{proof}
\subsection{Outage Probability of ${\rm R - D}$ Link}
The outage probability of the ${\rm R - D}$ link at a target rate $R$ b/sec/Hz is defined as
 \begin{equation}\label{p_out_rd_prob}
\Prd \left( {{P_{\rm{r}}},{{\cal C}_x}} \right) = \mathbb{P}\left\{ {{\mathcal{R}_{\rm{rd}}}} \left( 
{{P_{\rm{r}}},{{\cal C}_x}} \right) < R \right\}.
 \end{equation}
Then, the outage probability of the ${\rm R - D}$ link can be obtained from the following result.
\begin{lemma}\label{lemma3}
In an FDR cooperative system with IGS employed at the relay, the outage probability of the ${\rm R - D}$ link with a target rate of $R$ b/s/Hz is expressed as a function of the relay's transmit power and circularity coefficient as
\begin{equation}
\Prd\left( {{P_{\rm{r}}},{{\cal C}_x}} \right) = 1 - \frac{{{e^{ - \frac{{\Psi \left( {{{\cal C}_x}} \right)}}{{{P_{\rm{r}}}{\pi _{\rm{rd}}}\left( {1 - {\cal C}_x^2} \right)}}}}}}{{{P_{\rm{s}}}{\pi _{\rm{sd}}}\frac{{\Psi \left( {{{\cal C}_x}} \right)}}{{{P_r}{\pi _{\rm{rd}}}\left( {1 - {\cal C}_x^2} \right)}} + 1}}. \label{eq:RD_exact}
\end{equation}
\end{lemma}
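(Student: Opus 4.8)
The plan is to follow the same route as the proof of Lemma~\ref{lemma1}: substitute the closed-form rate \eqref{R_rd_zeng_simplified} into the outage definition \eqref{p_out_rd_prob}, reduce the outage event to a quadratic inequality in the channel gain $g_{\rm{rd}}$, solve for the threshold, condition on $g_{\rm{sd}}$, and then integrate out the exponential fading statistics of $g_{\rm{rd}}$ and $g_{\rm{sd}}$. The key structural fact — and the reason no Jensen-type bound is needed here, in contrast with the ${\rm S}$--${\rm R}$ link — is that the threshold on $g_{\rm{rd}}$ turns out to be \emph{affine} in $g_{\rm{sd}}$, which keeps the outer expectation closed-form.

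\textbf{Step 1 (reduction to a quadratic).} Plugging \eqref{R_rd_zeng_simplified} into \eqref{p_out_rd_prob}, and writing $\gamma = 2^{2R}-1$ and $b = P_{\rm{s}}g_{\rm{sd}}+1$, the event $\mathcal{R}_{\rm{rd}}<R$ is equivalent to
\[
\left(P_{\rm{r}}g_{\rm{rd}}+b\right)^2 - \left(P_{\rm{r}}g_{\rm{rd}}\mathcal{C}_x\right)^2 < (1+\gamma)\,b^2 ,
\]
i.e.\ to $(1-\mathcal{C}_x^2)P_{\rm{r}}^2 g_{\rm{rd}}^2 + 2 b\, P_{\rm{r}}g_{\rm{rd}} - \gamma b^2 < 0$. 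Since $\mathcal{C}_x<1$ this is a convex quadratic in $g_{\rm{rd}}$ whose constant term $-\gamma b^2$ is non-positive, so it has exactly one non-negative root $g_{\rm{rd}}^\circ$ and the outage region is $g_{\rm{rd}}\in[0,g_{\rm{rd}}^\circ)$. The quadratic formula gives
\[
g_{\rm{rd}}^\circ = \frac{b}{P_{\rm{r}}}\cdot\frac{\sqrt{1+\gamma(1-\mathcal{C}_x^2)}-1}{1-\mathcal{C}_x^2} = \frac{(P_{\rm{s}}g_{\rm{sd}}+1)\,\Psi(\mathcal{C}_x)}{P_{\rm{r}}(1-\mathcal{C}_x^2)} ,
\]
upon recognizing $\Psi(\mathcal{C}_x)=\sqrt{1+\gamma(1-\mathcal{C}_x^2)}-1$ from Lemma~\ref{lemma1}.

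\textbf{Step 2 (condition on $g_{\rm{sd}}$ and average).} Since $g_{\rm{rd}}$ is exponentially distributed with mean $\pi_{\rm{rd}}$ and independent of $g_{\rm{sd}}$, the conditional outage probability is the exponential CDF evaluated at $g_{\rm{rd}}^\circ$, namely $1-\exp\!\bigl(-g_{\rm{rd}}^\circ/\pi_{\rm{rd}}\bigr)$. Writing $\kappa = \Psi(\mathcal{C}_x)\big/\bigl(P_{\rm{r}}\pi_{\rm{rd}}(1-\mathcal{C}_x^2)\bigr)$, this equals $1 - e^{-\kappa}\,e^{-P_{\rm{s}}\kappa\, g_{\rm{sd}}}$. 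Averaging over $g_{\rm{sd}}$ (exponential with mean $\pi_{\rm{sd}}$) and using the Laplace transform identity $\mathbb{E}\{e^{-s g_{\rm{sd}}}\}=(1+s\pi_{\rm{sd}})^{-1}$ yields
\[
\Prd\left(P_{\rm{r}},\mathcal{C}_x\right) = 1 - \frac{e^{-\kappa}}{1+P_{\rm{s}}\pi_{\rm{sd}}\kappa} ,
\]
which is precisely \eqref{eq:RD_exact} after substituting back the value of $\kappa$.

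I do not expect a genuine obstacle here; the only points needing care are (i) confirming the quadratic has a single admissible root, which follows from $1-\mathcal{C}_x^2>0$ together with $-\gamma b^2\le 0$, and (ii) noting that $\Psi(\mathcal{C}_x)\ge 0$ so that $g_{\rm{rd}}^\circ\ge 0$ and the CDF step is valid. Unlike the ${\rm S}$--${\rm R}$ case, the argument of $\Psi$ does not depend on the random variable being averaged out ($g_{\rm{sd}}$ enters only through the affine prefactor $P_{\rm{s}}g_{\rm{sd}}+1$), so the expectation is exact and no upper bound is required.
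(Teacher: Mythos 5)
Your proposal is correct and follows the same route as the paper's own proof: reduce the outage event to the quadratic $(1-\mathcal{C}_x^2)P_{\rm{r}}^2 g_{\rm{rd}}^2 + 2(P_{\rm{s}}g_{\rm{sd}}+1)P_{\rm{r}}g_{\rm{rd}} - \gamma(P_{\rm{s}}g_{\rm{sd}}+1)^2<0$, take its unique non-negative root $g_{\rm{rd}}^\circ$, condition on $g_{\rm{sd}}$, and average out the exponential $g_{\rm{sd}}$ in closed form. Your added observations (single admissible root, and the fact that the $\Psi$ argument is deterministic here so no Jensen bound is needed) are accurate refinements of the same argument.
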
 
\begin{proof}
Similar to the first hop, after substituting \eqref{R_rd_zeng_simplified} in \eqref{p_out_rd_prob}, we obtain the following inequality
\begin{align}
\Prd \left( {{P_{\rm{r}}},{{\cal C}_x}} \right) =& \mathbb{P}\Big\{ P_{\rm{r}}^2g_{\rm{rd}}^2\left( {1 - {\cal C}_x^2} \right) + 2{P_{\rm{r}}}{g_{\rm{rd}}}\left( {{P_{\rm{s}}}{g_{\rm{sd}}} + 1} \right) - \nonumber \\
& \quad \quad \quad \quad  \quad \quad  \gamma {{\left( {{P_{\rm{s}}}{g_{\rm{sd}}} + 1} \right)}^2} < 0 \Big\}.
\end{align} 
Calculating the non-negative zero of the above inequality, we obtain
\begin{equation}
g_{\rm{rd}}^\circ  = \frac{{\left( {{P_{\rm{s}}}{g_{\rm{sd}}} + 1} \right)}}{{{P_{\rm{r}}}}}\frac{{\Psi \left( {{{\cal C}_x}} \right)}}{{\left( {1 - {\cal C}_x^2} \right)}}.
\end{equation}
We can express $\Prd$, conditioned on ${g_{\rm{sd}}}$, as
\begin{equation}
\Prd\left( {{P_{\rm{r}}},{{\cal C}_x}\left| {{g_{\rm{sd}}}} \right.} \right) = 1 - {e^{ - \frac{{g_{\rm{rd}}^\circ }}{{{\pi _{\rm{rd}}}}}}}.
\end{equation}
By averaging over the exponentially distributed $g_{\rm{sd}}$, we directly obtain \eqref{eq:RD_exact}.
\end{proof}
%\begin{equation}
%\Prd\left( {{P_{\rm{r}}},{{\cal C}_x}} \right) = 1 - \frac{{{e^{ - \frac{{\Psi \left( {{{\cal C}_x}} \right)}}{{{P_{\rm{r}}}{\pi _{\rm{rd}}}\left( {1 - {\cal C}_x^2} \right)}}}}}}{{{P_{\rm{s}}}{\pi _{\rm{sd}}}\frac{{\Psi \left( {{{\cal C}_x}} \right)}}{{{P_r}{\pi _{\rm{rd}}}\left( {1 - {\cal C}_x^2} \right)}} + 1}}. \label{eq:RD_exact}
%\end{equation} 
From Lemma \ref{lemma3}, it can be noticed that, for the PGS case, i.e., $\mathcal{C}_x=0$, Eq. \eqref{eq:RD_exact} yields the known expression for PGS in \cite{kwon2010optimal} as 
\begin{equation}\label{p_out_rd_C_0}
{{\cal P}_{{\rm{rd}}}}\left( {{P_{\rm{r}}},0} \right) = 1 - \frac{{{P_{\rm{r}}}{\pi _{{\rm{rd}}}}{e^{ - \frac{\eta }{{{P_{\rm{r}}}{\pi _{{\rm{rd}}}}}}}}}}{{{P_{\rm{r}}}{\pi _{{\rm{rd}}}} + {P_{\rm{s}}}{\pi _{{\rm{sd}}}}\eta }}.
\end{equation}
Also, for the maximally improper case, i.e., $\mathcal{C}_x=1$, it yields
\begin{equation}
\Prd\left( {{P_{\rm{r}}},1} \right) = \mathop {\lim }\limits_{{{\cal C}_x} \to 1} {{\cal P}_{\rm{rd}}}\left( {{P_{\rm{r}}},{{\cal C}_x}} \right) = 1 - \frac{{{e^{ - \frac{\gamma }{{2{P_{\rm{r}}}{\pi _{\rm{rd}}}}}}}}}{{\frac{{\gamma {P_{\rm{s}}}{\pi _{\rm{sd}}}}}{{2{P_{\rm{r}}}{\pi _{\rm{rd}}}}} + 1}}.
\end{equation} 
\subsection{End-to-End Outage Performance}
For the PGS case, from \eqref{p_out_sr_C_0} and \eqref{p_out_rd_C_0}, we have the exact expression for the end-to-end outage probability \cite{kwon2010optimal} as
\begin{equation}\label{e2e_proper}
{{\cal P}_{{\rm{out}}}}\left( {{P_{\rm{r}}},0} \right) = 1 - \frac{{{P_{\rm{s}}}{P_{\rm{r}}}{\pi _{{\rm{sr}}}}{\pi _{{\rm{rd}}}}{e^{ - \eta \left( {\frac{1}{{{P_{\rm{s}}}{\pi _{{\rm{sr}}}}}} + \frac{1}{{{P_{\rm{r}}}{\pi _{{\rm{rd}}}}}}} \right)}}}}{{\left( {{P_{\rm{s}}}{\pi _{{\rm{sr}}}} + {P_{\rm{r}}}{\pi _{{\rm{rr}}}}\eta } \right)\left( {{P_{\rm{r}}}{\pi _{{\rm{rd}}}} + {P_{\rm{s}}}{\pi _{{\rm{sd}}}}\eta } \right)}}.
\end{equation}
On the other hand, when IGS is used by the relay, the end-to-end upper bound of the outage probability can be obtained from Theorem \ref{theorem_e2e_outage}.
\begin{theorem}\label{theorem_e2e_outage}
In an FDR cooperative system to be used for coverage extension with IGS adopted at the relay while considering the direct link as interference at the destination, the end-to-end outage probability as a function of the relay's transmit power and circularity coefficient can be upper bounded by 
\begin{align}\label{p_out_UB}
\!\!\!\!\Pout^{\rm{UB}}  \left( {{P_{\rm{r}}},{{\cal C}_x}} \right) \!=\! 1 - \frac{{{e^{ -\Big(\frac{{\Psi \left( {{{\cal C}_x}} \right)}}{{{P_{\rm{r}}}{\pi _{\rm{rd}}}\left( {1 - {\cal C}_x^2} \right)}}+ \frac{{{{P_{\rm{r}}}{\pi _{\rm{rr}}} + 1} }}{{{P_{\rm{s}}}{\pi _{\rm{sr}}}}}\Psi \left( {\frac{{{P_{\rm{r}}}{\pi _{\rm{rr}}}{{\cal C}_x}}}{{{P_{\rm{r}}}{\pi _{\rm{rr}}} + 1}}} \right)\Big)}}}}{{{P_{\rm{s}}}{\pi _{\rm{sd}}}\frac{{{\Psi}\left( {{{\cal C}_x}} \right)}}{{{P_{\rm{r}}}{\pi _{\rm{rd}}}\left( {1 - {\cal C}_x^2} \right)}} + 1}}.\!\!
\end{align}
\end{theorem}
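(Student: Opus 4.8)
The plan is to assemble Theorem~\ref{theorem_e2e_outage} directly from the pieces already in hand, namely the factorization \eqref{p_out_overall} of the end-to-end outage probability, the upper bound of Lemma~\ref{lemma2} on the first hop, and the exact expression of Lemma~\ref{lemma3} on the second hop. Since $\Pout = 1-\Psrbar\,\Prdbar$ with $\Psrbar=1-\Psr$ and $\Prdbar=1-\Prd$, and since $\Psr \le \mathcal{P}_{\rm{sr}}^{\rm{UB}}$ by Lemma~\ref{lemma2}, monotonicity gives $\Psrbar \ge 1-\mathcal{P}_{\rm{sr}}^{\rm{UB}}$; because both complementary probabilities lie in $[0,1]$, multiplying by the nonnegative factor $\Prdbar$ preserves the inequality, so $\Psrbar\,\Prdbar \ge \big(1-\mathcal{P}_{\rm{sr}}^{\rm{UB}}\big)\Prdbar$, and therefore $\Pout \le 1 - \big(1-\mathcal{P}_{\rm{sr}}^{\rm{UB}}\big)\Prdbar =: \Pout^{\rm{UB}}$.

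The remaining work is purely algebraic: substitute the closed forms. From Lemma~\ref{lemma2},
\[
1-\mathcal{P}_{\rm{sr}}^{\rm{UB}}\left({P_{\rm{r}}},{{\cal C}_x}\right) = e^{ - \frac{{{P_{\rm{r}}}{\pi _{\rm{rr}}} + 1}}{{{P_{\rm{s}}}{\pi _{\rm{sr}}}}}\Psi \left( \frac{{P_{\rm{r}}}{\pi _{\rm{rr}}}{{\cal C}_x}}{{P_{\rm{r}}}{\pi _{\rm{rr}}} + 1} \right)},
\]
and from Lemma~\ref{lemma3},
\[
\Prdbar\left({P_{\rm{r}}},{{\cal C}_x}\right) = \frac{e^{ - \frac{\Psi \left( {{\cal C}_x} \right)}{{P_{\rm{r}}}{\pi _{\rm{rd}}}\left( 1 - {\cal C}_x^2 \right)}}}{{P_{\rm{s}}}{\pi _{\rm{sd}}}\frac{\Psi \left( {{\cal C}_x} \right)}{{P_{\rm{r}}}{\pi _{\rm{rd}}}\left( 1 - {\cal C}_x^2 \right)} + 1}.
\]
Taking the product, the two exponentials combine into a single exponential whose argument is the negative of the sum of the two exponents, the denominator carries over unchanged, and subtracting from $1$ yields exactly \eqref{p_out_UB}. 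This requires only the law $e^{a}e^{b}=e^{a+b}$ and bookkeeping of signs.

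One subtlety worth flagging is independence: the factorization $\Pout = 1-\Psrbar\,\Prdbar$ presupposes that the success events on the two hops are statistically independent. This holds because the first-hop outage event depends only on $(g_{\rm{sr}},g_{\rm{rr}})$ while the second-hop event depends only on $(g_{\rm{rd}},g_{\rm{sd}})$, and all four fading gains are mutually independent by the channel model of Section~\ref{subsec:chan_model}; this justification is already implicit in \eqref{p_out_overall}, so nothing new is needed. A second point: after applying Jensen's inequality to the first hop, the first-hop bound $\mathcal{P}_{\rm{sr}}^{\rm{UB}}$ is deterministic in the channel statistics, so no further averaging over $g_{\rm{rr}}$ is required when forming the product with the already-averaged $\Prdbar$.

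I do not anticipate a genuine obstacle here, since the hard analytic content (the convexity claim of Proposition~\ref{prop_1} enabling Jensen, and the integration over $g_{\rm{sd}}$ in Lemma~\ref{lemma3}) has already been established. The only thing to be careful about is the direction of the inequality through the two multiplications and the final subtraction: replacing $\Psr$ by a larger quantity makes $\Psrbar$ smaller, hence $\Psrbar\Prdbar$ smaller, hence $1-\Psrbar\Prdbar$ larger, so $\Pout^{\rm{UB}}$ is indeed an upper bound. Writing this out cleanly and then performing the substitution completes the proof.
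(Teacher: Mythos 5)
Your proposal is correct and follows essentially the same route as the paper's proof: substituting the Lemma~\ref{lemma2} upper bound and the Lemma~\ref{lemma3} exact expression into the factorization \eqref{p_out_overall} and combining the exponentials. Your added remarks on the direction of the inequality and on the independence underlying \eqref{p_out_overall} are sound and merely make explicit what the paper leaves implicit.
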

\begin{proof}
Based on the derived upper bound and exact expressions of the outage probability for ${\rm S - R}$ and ${\rm R - D}$ links from Lemma \ref{lemma2} and Lemma \ref{lemma3}, respectively, and by direct substitution in \eqref{p_out_overall}, we obtain the result.
\end{proof}
\textit{Asymptiotic Analysis:} 
For maximally IGS, we obtain the upper bound of the end-to-end outage probability from the following corollary.
\begin{corollary}\label{corollary1}
When the relay node in an FDR cooperative system uses maximally IGS, the end-to-end outage probability can be upper-bounded by
\begin{equation}
\mathop {\lim }\limits_{{{\cal C}_x} \to 1} \Pout^{\rm{UB}} = 1 - \frac{{2{P_{\rm{r}}}{\pi _{{\rm{rd}}}}{e^{ - \left( {\frac{\gamma }{{2{P_{\rm{r}}}{\pi _{{\rm{rd}}}}}} + \frac{{\left( {{P_{\rm{r}}}{\pi _{{\rm{rr}}}} + 1} \right)}}{{{\pi _{{\rm{sr}}}}}}{\Psi _s}\left( {\frac{{{P_{\rm{r}}}{\pi _{{\rm{rr}}}}}}{{{P_{\rm{r}}}{\pi _{{\rm{rr}}}} + 1}}} \right)} \right)}}}}{{2{P_{\rm{r}}}{\pi _{{\rm{rd}}}} + \gamma {P_{\rm{s}}}{\pi _{{\rm{sd}}}}}}. 
\end{equation}
\end{corollary}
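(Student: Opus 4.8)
The plan is to derive Corollary~\ref{corollary1} as a pointwise limit of the closed-form bound \eqref{p_out_UB} of Theorem~\ref{theorem_e2e_outage}; this is a calculus exercise, not a new probabilistic argument. First I would isolate which ingredients of \eqref{p_out_UB} are continuous at $\mathcal{C}_x = 1$ and which are not. The map $\mathcal{C}_x \mapsto \Psi(\mathcal{C}_x) = \sqrt{1 + \gamma(1 - \mathcal{C}_x^2)} - 1$ is continuous on $[0,1]$ since $\gamma \ge 0$ and $1 - \mathcal{C}_x^2 \in [0,1]$ keep the radicand $\ge 1$; the inner argument $\frac{P_{\rm r}\pi_{\rm rr}\mathcal{C}_x}{P_{\rm r}\pi_{\rm rr}+1}$ tends to $\frac{P_{\rm r}\pi_{\rm rr}}{P_{\rm r}\pi_{\rm rr}+1}\in[0,1)$, so $\Psi$ of it converges too; and $\exp(\cdot)$ together with the rational outer function $x \mapsto 1 - \frac{e^{-(\cdot)}}{P_{\rm s}\pi_{\rm sd}\,x + 1}$ are continuous. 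Hence the single delicate object is the quotient $\dfrac{\Psi(\mathcal{C}_x)}{P_{\rm r}\pi_{\rm rd}(1 - \mathcal{C}_x^2)}$, which appears both in the exponent and in the denominator of \eqref{p_out_UB} and is of the indeterminate form $0/0$ as $\mathcal{C}_x \to 1$.

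Second, I would evaluate that quotient. Writing $u = 1 - \mathcal{C}_x^2 \to 0^+$, it equals $\frac{1}{P_{\rm r}\pi_{\rm rd}}\cdot\frac{\sqrt{1+\gamma u}-1}{u}$, and either L'Hôpital's rule or the first-order expansion $\sqrt{1+\gamma u} = 1 + \tfrac{\gamma}{2}u + O(u^2)$ gives $\lim_{u\to 0^+}\frac{\sqrt{1+\gamma u}-1}{u} = \frac{\gamma}{2}$, so that $\dfrac{\Psi(\mathcal{C}_x)}{P_{\rm r}\pi_{\rm rd}(1 - \mathcal{C}_x^2)} \to \dfrac{\gamma}{2P_{\rm r}\pi_{\rm rd}}$.

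Third, I would substitute this limit back into \eqref{p_out_UB}, passing the limit through $\exp$ and through the division by continuity (the quantity in the outer denominator converges to $P_{\rm s}\pi_{\rm sd}\frac{\gamma}{2P_{\rm r}\pi_{\rm rd}}+1>0$, so the termwise passage is legitimate), obtaining
\[
\lim_{\mathcal{C}_x\to 1}\Pout^{\rm{UB}} = 1 - \frac{e^{-\left(\frac{\gamma}{2P_{\rm r}\pi_{\rm rd}} + \frac{P_{\rm r}\pi_{\rm rr}+1}{P_{\rm s}\pi_{\rm sr}}\Psi\left(\frac{P_{\rm r}\pi_{\rm rr}}{P_{\rm r}\pi_{\rm rr}+1}\right)\right)}}{P_{\rm s}\pi_{\rm sd}\,\frac{\gamma}{2P_{\rm r}\pi_{\rm rd}}+1},
\]
and then clear the fraction by multiplying numerator and denominator by $2P_{\rm r}\pi_{\rm rd}$, which produces the $2P_{\rm r}\pi_{\rm rd}$ prefactor in the numerator and the $2P_{\rm r}\pi_{\rm rd} + \gamma P_{\rm s}\pi_{\rm sd}$ in the denominator of the stated expression (the $\Psi_s$ and the exponent's normalization in the statement being read as $\Psi$ and $\frac{P_{\rm r}\pi_{\rm rr}+1}{P_{\rm s}\pi_{\rm sr}}$, exactly as inherited from Theorem~\ref{theorem_e2e_outage}).

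There is essentially no obstacle here: the only step requiring genuine care is the $0/0$ limit of $\Psi(\mathcal{C}_x)/(1-\mathcal{C}_x^2)$, and a small sanity check that this quantity — which also sits in the denominator of \eqref{p_out_UB} — stays finite and the denominator stays positive throughout, so the limit may indeed be taken term by term.
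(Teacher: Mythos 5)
Your proposal is correct and matches the paper's (implicit) derivation: the corollary follows by taking the limit of the bound in Theorem~\ref{theorem_e2e_outage}, with the only nontrivial step being the $0/0$ limit $\Psi(\mathcal{C}_x)/(1-\mathcal{C}_x^2)\to\gamma/2$, exactly as the paper already does when computing $\Prd(P_{\rm r},1)$ from Lemma~\ref{lemma3}. Your reading of $\Psi_s$ and of the exponent's normalization as inherited from \eqref{p_out_UB} is the right way to resolve the statement's typographical slip.
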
  
In order to evaluate the end-to-end outage probability upper bound performance with respect to RSI when using maximally IGS at the relay transmitter, we state the following theorem.
\begin{theorem}
In the limiting case where $\pi_{\rm{rr}} \rightarrow \infty$ with a fixed relay transmit power $P_{\rm{r}}$, the exact end-to-end outage probability for the PGS case ${{\cal P}_{{\rm{out}}}}\left( {{P_{\rm{r}}},0} \right) \rightarrow 1$, while the upper bound for the end-to-end outage probability for the maximally IGS case $\Pout^{\rm{UB}}\left( {{P_{\rm{r}}},1} \right) \rightarrow {K}$, where

\begin{equation}\label{p_out_pirr_inf}
{K} = 1 - \frac{{2{P_{\rm{r}}}{\pi _{{\rm{rd}}}}{e^{ - \left( {\frac{\gamma }{{{P_{\rm{r}}}{\pi _{{\rm{rd}}}}}} + \frac{\gamma }{{{P_{\rm{s}}}{\pi _{{\rm{sr}}}}}}} \right)}}}}{{2{P_{\rm{r}}}{\pi _{{\rm{rd}}}} + \gamma {P_{\rm{s}}}{\pi _{{\rm{sd}}}}}}.
\end{equation} 
\end{theorem}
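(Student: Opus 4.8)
The plan is to prove the two assertions separately. The PGS statement is a one-line limit of the exact closed form \eqref{e2e_proper}, while the IGS statement follows by taking the limits ${\cal C}_x\to1$ and then $\pi_{\rm rr}\to\infty$ in the end-to-end upper bound \eqref{p_out_UB}; the first of these limits has already been performed in Corollary~\ref{corollary1}.

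For the PGS part, I would observe that in \eqref{e2e_proper} the numerator $P_{\rm s}P_{\rm r}\pi_{\rm sr}\pi_{\rm rd}\,e^{-\eta\left(1/(P_{\rm s}\pi_{\rm sr})+1/(P_{\rm r}\pi_{\rm rd})\right)}$ is a fixed positive quantity independent of $\pi_{\rm rr}$, whereas in the denominator the factor $\left(P_{\rm s}\pi_{\rm sr}+P_{\rm r}\pi_{\rm rr}\eta\right)$ diverges as $\pi_{\rm rr}\to\infty$ (using $\eta=2^{R}-1>0$ and $P_{\rm r}>0$), the remaining factor being fixed. Hence the subtracted fraction tends to $0$ and ${\cal P}_{\rm out}(P_{\rm r},0)\to1$. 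This part is immediate.

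For the IGS part, I would start from the expression of Corollary~\ref{corollary1}. The contribution of the second hop (${\rm R}$-${\rm D}$), both to the exponent and to the outer prefactor, does not involve $\pi_{\rm rr}$ and passes through the limit unchanged, so the entire computation reduces to the limit of the first-hop (${\rm S}$-${\rm R}$) term
\[
T(\pi_{\rm rr}) \;=\; \frac{P_{\rm r}\pi_{\rm rr}+1}{P_{\rm s}\pi_{\rm sr}}\,\Psi\!\left(\frac{P_{\rm r}\pi_{\rm rr}}{P_{\rm r}\pi_{\rm rr}+1}\right).
\]
As $\pi_{\rm rr}\to\infty$ this is an indeterminate form of type $\infty\cdot0$, since the argument of $\Psi$ tends to $1$ and $\Psi(1)=0$. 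I would rationalize $\Psi$ as $\Psi(x)=\gamma(1-x^2)/\left(\sqrt{1+\gamma(1-x^2)}+1\right)$ and set $w=P_{\rm r}\pi_{\rm rr}/(P_{\rm r}\pi_{\rm rr}+1)$, so that $1-w^2=(2P_{\rm r}\pi_{\rm rr}+1)/(P_{\rm r}\pi_{\rm rr}+1)^2$; substituting,
\[
T(\pi_{\rm rr}) \;=\; \frac{\gamma}{P_{\rm s}\pi_{\rm sr}}\cdot\frac{2P_{\rm r}\pi_{\rm rr}+1}{P_{\rm r}\pi_{\rm rr}+1}\cdot\frac{1}{\sqrt{1+\gamma(1-w^2)}+1},
\]
which tends to $\gamma/(P_{\rm s}\pi_{\rm sr})$ as $\pi_{\rm rr}\to\infty$, since the second factor tends to $2$ and the third to $1/2$. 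Substituting $T(\pi_{\rm rr})\to\gamma/(P_{\rm s}\pi_{\rm sr})$ into the Corollary~\ref{corollary1} expression and collecting terms yields the limiting value $K$ stated in \eqref{p_out_pirr_inf}.

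The main obstacle is precisely this $\infty\cdot0$ evaluation: one must expand $\Psi$ to leading order near its argument $1$ and keep track of the competing $O(\pi_{\rm rr})$ and $O(1/\pi_{\rm rr})$ factors, rather than naively substituting the limiting argument into $\Psi$. It is also worth making explicit in the write-up why the order of limits matters: for any fixed ${\cal C}_x<1$, letting $\pi_{\rm rr}\to\infty$ makes the first-hop exponent term diverge — its prefactor blows up while $\Psi({\cal C}_x)$ stays bounded away from $0$ — so $\Pout^{\rm UB}\to1$; it is only the maximally improper choice ${\cal C}_x=1$, at which $\Psi$ vanishes at the limiting argument, that keeps the bound at a finite value $K<1$, which is the qualitative message of the theorem.
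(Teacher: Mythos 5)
Your proof is correct and follows exactly the route of the paper's own (one-sentence) proof: take $\pi_{\rm rr}\to\infty$ in \eqref{e2e_proper} for PGS and in the Corollary~\ref{corollary1} expression for maximally IGS; your careful resolution of the $\infty\cdot 0$ form via $\Psi(x)=\gamma(1-x^2)/\bigl(\sqrt{1+\gamma(1-x^2)}+1\bigr)$, giving the first-hop exponent limit $\gamma/(P_{\rm s}\pi_{\rm sr})$, is precisely the detail the paper leaves implicit. One caveat: since the second-hop term is independent of $\pi_{\rm rr}$, your substitution actually yields $\gamma/(2P_{\rm r}\pi_{\rm rd})$ in the exponent (consistent with Lemma~\ref{lemma3} and Corollary~\ref{corollary1}), not the $\gamma/(P_{\rm r}\pi_{\rm rd})$ printed in \eqref{p_out_pirr_inf}, so the claim that ``collecting terms yields $K$'' silently absorbs what appears to be a factor-of-two typo in the theorem statement and should be flagged rather than glossed over.
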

\begin{proof}
The theorem is obtained from \eqref{e2e_proper} and Corollary \ref{corollary1} by taking the limit at $\pi_{\rm{rr}} \rightarrow \infty$.
\end{proof}

Interestingly, different from the PGS case, the maximally IGS introduces immunity against high RSI and achieves less outage probability with a constant upper bound \eqref{p_out_pirr_inf}, which depends on the quality of both $\rm{S-R}$ and $\rm{R-D}$ links, in addition to the target rate. 
%%%%%%%%%%%%%%%%%%%%%%%%%%%%%%%%%%%%%%%%%%%%%%%
%%%%%%%%%%%%%%%%%         Optimization       %%%%%%%%%%%%%%%%%%%
%%%%%%%%%%%%%%%%%%%%%%%%%%%%%%%%%%%%%%%%%%%%%%%
\subsection{Improper Signaling Optimization}
In this part, we optimize the parameters of the IGS transmit signal in order to minimize the end-to-end outage probability given some boundaries for the optimization variables. We consider two scenarios. First, assuming a fixed relay's transmit power, we optimize the circularity coefficient. Second, we optimize the joint power and circularity coefficient.    
\subsubsection{Circularity Coefficient Optimization}
In order to investigate the merits of IGS over conventional PGS in FDR channels, we aim at finding the optimal circularity coefficient value that minimizes the end-to-end outage probability. Specifically, we aim at solving the following optimization problem:
\begin{align}
\mathop {\min }\limits_{{{\cal C}_x}} \quad  & \Pout^{\rm{UB}}\left( {{P_{\rm{r}}},{{\cal C}_x}} \right) \\
\rm{s.t.} \quad 
& 0 \leq {{\cal C}_x} \le 1. \nonumber
\end{align}
In order to solve the optimization problem, we analyze the convexity properties of the objective function $\Pout^{\rm{UB}}\left( {{P_{\rm{r}}},{{\cal C}_x}}\right)$. In general, the function is found to be non-convex due to the indefinite sign of the second derivative. However, other desirable properties that allow us to find the global optimal point are presented in the following theorem.
\begin{theorem}\label{theorem_unimodality}
When IGS is employed at the relay, the upper bound of the end-to-end outage probability is either a monotonic or a unimodal function in ${\cal C}_x$ over the interior of the region of interest, $0\leq{\cal C}_x\leq1$.
\end{theorem}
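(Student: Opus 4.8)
The plan is to collapse the statement into a one–dimensional sign analysis of a single scalar function. Writing the bound of Theorem~\ref{theorem_e2e_outage} as $\Pout^{\rm{UB}}(P_{\rm r},\mathcal{C}_x)=1-e^{-h(\mathcal{C}_x)}$ with
\begin{equation}
h(\mathcal{C}_x)\;=\;u(\mathcal{C}_x)+v(\mathcal{C}_x)+\ln\!\big(1+P_{\rm s}\pi_{\rm sd}\,u(\mathcal{C}_x)\big),
\end{equation}
where $u(\mathcal{C}_x)=\Psi(\mathcal{C}_x)/\big(P_{\rm r}\pi_{\rm rd}(1-\mathcal{C}_x^2)\big)$ is the ${\rm R - D}$ exponent term (which also governs the denominator) and $v(\mathcal{C}_x)=\frac{P_{\rm r}\pi_{\rm rr}+1}{P_{\rm s}\pi_{\rm sr}}\Psi\!\big(\frac{P_{\rm r}\pi_{\rm rr}\mathcal{C}_x}{P_{\rm r}\pi_{\rm rr}+1}\big)$ is the ${\rm S - R}$ exponent term. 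Since $x\mapsto 1-e^{-x}$ is strictly increasing, $\Pout^{\rm{UB}}$ is monotonic (resp.\ unimodal with an interior minimum) in $\mathcal{C}_x$ on $(0,1)$ if and only if $h$ is; so it suffices to study $h$.

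I would then fix the monotonicity directions of $u$ and $v$. Rationalizing $\Psi$ gives $\Psi(\mathcal{C}_x)=\gamma(1-\mathcal{C}_x^2)/\big(1+\sqrt{1+\gamma(1-\mathcal{C}_x^2)}\big)$, hence $u(\mathcal{C}_x)=\gamma/\big(P_{\rm r}\pi_{\rm rd}(1+w)\big)$ with $w:=\sqrt{1+\gamma(1-\mathcal{C}_x^2)}$ strictly decreasing on $(0,1)$; so $u'>0$, and since $\Psi$ is strictly decreasing on $[0,1]$ we have $v'<0$. Differentiating, $h'(\mathcal{C}_x)=A(\mathcal{C}_x)-B(\mathcal{C}_x)$ with $A:=u'\big(1+\frac{P_{\rm s}\pi_{\rm sd}}{1+P_{\rm s}\pi_{\rm sd}u}\big)>0$ and $B:=-v'>0$, and both $A$ and $B$ carry a common multiplicative factor $\mathcal{C}_x$ that cancels in the ratio $A/B$. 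Thus $h'$ has the same sign as $A/B-1$, and the whole theorem follows from one claim: \emph{$A/B$ is strictly monotone in $\mathcal{C}_x$ on $(0,1)$}, with finite positive one-sided limits at the endpoints. Indeed, if $A/B$ is strictly increasing then exactly one of three cases occurs: $A/B\le1$ on all of $(0,1)$ ($\Pout^{\rm{UB}}$ strictly decreasing), $A/B\ge1$ on all of $(0,1)$ ($\Pout^{\rm{UB}}$ strictly increasing), or $A/B$ crosses $1$ once at $\mathcal{C}_x^\star\in(0,1)$, where $h$ switches from strictly decreasing to strictly increasing — i.e.\ $\Pout^{\rm{UB}}$ is unimodal with the unique interior minimizer $\mathcal{C}_x^\star$ — which is precisely the claimed dichotomy.

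The monotonicity of $A/B$ is the step I expect to be the main obstacle. My approach is to use $w\in(1,\sqrt{1+\gamma})$ as the running variable (increasing $\mathcal{C}_x$ corresponds to decreasing $w$): after cancelling the $\mathcal{C}_x$ factor, a short computation gives
\begin{equation}
\frac{A}{B}\;\propto\;\frac{\tilde w}{\,w(1+w)^2\,}\cdot\frac{P_{\rm r}\pi_{\rm rd}(1+P_{\rm s}\pi_{\rm sd})(1+w)+P_{\rm s}\pi_{\rm sd}\gamma}{P_{\rm r}\pi_{\rm rd}(1+w)+P_{\rm s}\pi_{\rm sd}\gamma},
\end{equation}
with a positive $w$-independent proportionality constant, where $\tilde w:=\sqrt{1+\gamma(1-b^2\mathcal{C}_x^2)}=\sqrt{b^2w^2+(1-b^2)(1+\gamma)}$ and $b:=P_{\rm r}\pi_{\rm rr}/(P_{\rm r}\pi_{\rm rr}+1)\in(0,1)$, so $\tilde w$ is an increasing function of $w$. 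Taking $\frac{d}{dw}\ln(A/B)$ yields $\frac{b^2w}{\tilde w^2}-\frac1w-\frac{2}{1+w}$ plus the logarithmic derivative of the last rational factor. I would then bound termwise: $\frac{b^2w}{\tilde w^2}<\frac1w$ because $\tilde w^2>b^2w^2$; and the logarithmic derivative of the last factor is strictly below $\frac{1}{1+w}$, since its first piece $\frac{P_{\rm r}\pi_{\rm rd}(1+P_{\rm s}\pi_{\rm sd})}{P_{\rm r}\pi_{\rm rd}(1+P_{\rm s}\pi_{\rm sd})(1+w)+P_{\rm s}\pi_{\rm sd}\gamma}<\frac{1}{1+w}$ and the subtracted piece is positive. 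Combining, $\frac{d}{dw}\ln(A/B)<-\frac{2}{1+w}+\frac{1}{1+w}=-\frac{1}{1+w}<0$, so $A/B$ is strictly decreasing in $w$, hence strictly increasing in $\mathcal{C}_x$, which closes the argument. The remaining care is routine: $u,v$ are smooth on $(0,1)$, $1+P_{\rm s}\pi_{\rm sd}u>0$ throughout, and every factor above has a finite nonzero limit as $\mathcal{C}_x\to0^+$ and as $\mathcal{C}_x\to1^-$, so the trichotomy is exhaustive.
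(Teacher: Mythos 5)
Your proof is correct, and it follows a genuinely different route from the paper's. Both arguments come down to showing that the derivative of the bound vanishes at most once in the interior, but the key lemma differs. The paper differentiates $1-\Pout^{\rm UB}$ directly, factors out $\mathcal{C}_x$ and the positive terms, substitutes $z=\Psi(\mathcal{C}_x)+2$, and caps the number of stationary points by applying Descartes' rule of signs to the resulting cubic $T(z)$. You instead fold the denominator into the exponent as $\ln\!\left(1+P_{\rm s}\pi_{\rm sd}\,u\right)$, reduce everything to the sign of $h'=A-B$ with $A,B>0$, and prove that the ratio $A/B$ is strictly monotone via termwise bounds on its logarithmic derivative in the variable $w=\sqrt{1+\gamma(1-\mathcal{C}_x^2)}$; I have checked the algebra for $A/B$ and the three bounds, and they hold. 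Your version buys two things: (i) it sidesteps a delicate point in the paper's argument, namely that the lower-order coefficients of $T(z)$ contain the factor $t_z$, which itself depends on $z$, so $T$ is not literally a fixed-coefficient polynomial and Descartes' rule does not apply verbatim, whereas your inequality chain needs no such step; and (ii) because $A/B$ is strictly increasing, the unique interior stationary point, when it exists, is identified as a minimum ($h'$ passes from negative to positive), which is exactly what the subsequent bisection-based optimization requires, while a bare root count does not distinguish a minimum from a maximum. The only caveat worth stating explicitly is the standing assumption $P_{\rm r}\pi_{\rm rr}>0$, so that $b\in(0,1)$ and $B>0$ on the open interval; in the degenerate case $b=0$ one has $B\equiv 0$ and $h$ is trivially monotone, so the conclusion still holds.
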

\begin{proof}
The proof is provided in Appendix \ref{theorem_unimodality_proof}.
\end{proof}
Since monotonicity and unimodality are special cases of quasi-convexity, such a result allows for the use of quasi-convex optimization algorithms. For instance, the optimal ${{\cal C}_x}$ can be numerically obtained using the well-known bisection method operating on its derivative given in appendix B.
%%%%%%%%%%%%%%%%%%%%%
\subsubsection{Joint Power and Circularity Coefficient Optimization}
The power optimization problem in PGS is formulated as
\begin{align}
\mathop {\min }\limits_{{P_{\rm{r}}}} \quad  & {\cal P}_{\rm out}\left( {{P_{\rm{r}}},0} \right) 
\\
s.t. \quad \nonumber
& 0 < {P_{\rm{r}}} \le {P_{\max }}.\nonumber
\end{align}
Also, for the IGS case, the joint problem is given as follows:
\begin{align}\label{improper_opt}
\mathop {\min }\limits_{{P_{\rm{r}}},{{\cal C}_x}} \quad  & {\cal P}_{\rm out}^{\rm UB}\left( {{P_{\rm{r}}},{{\cal C}_x}} \right)  
\\ 
s.t. \quad \nonumber
& 0 < {P_{\rm{r}}} \le {P_{\max }},\\
\quad
& 0 \leq {{\cal C}_x} \le 1. \nonumber
\end{align}
 It can be readily verified that the PGS end-to-end outage prabability function is a non-convex function in the relay power. However, it can be shown that the interior of the function is unimodal in the relay power following similar footsteps of the proof in Theorem \ref{theorem_unimodality}\footnote{Proof is omitted due to space limitations.}. Hence, the bisection method can be used to locate the global optimum.
 
The second problem is a minimization of non-convex objective function with simple box constraints. Thus, one may try to solve it numerically by, for example, the gradient projected method or the projected Newton's method without any guarantee to converge to an optimal solution \cite{Bertsekas1999nonlinear}. For performance analysis purposes, we find the optimal solution via a grid search. Moreover, in Theorem \ref{theorem_unimodality}, we proved that the objective function is either a monotonic or unimodal in the relay's circularity coefficient over the interior of the constraint set. Although it could not be analytically shown, the objective function in \eqref{improper_opt} with a fixed circularity coefficient is observed to exhibit similar properties in the relay power which, if true, makes it possible to be solved by the bisection method to obtain optimal relay power for a given circularity coefficient. This observation motivates us to use a coordinate descent method based on a two-dimensional bisection algorithm as in \cite{fadel2012qos}. Fortunately, as it will be noticed in the numerical results section, it always converges numerically to the optimal solution obtained by exhaustive grid search.        
 
%%%%%%%%%%%%%%%%%%%%%%%%%%%%%%%%%%%%%%%%%%%%%%%
%%%%%%%%%%%%%%%%%         Numerical Results       %%%%%%%%%%%%%%%%
%%%%%%%%%%%%%%%%%%%%%%%%%%%%%%%%%%%%%%%%%%%%%%%
\section{Numerical Results}\label{sec:results}
We numerically evaluate the benefits that can be reaped by employing IGS FDR. Throughout the following, we compare the performance of IGS to that of PGS as a benchmark. For PGS, we show the unoptimized performance with maximum power allocation (MPA), alongside that with optimized relay power using the bisection algorithm (BA) in addition to a fine grid search (GS) for verification purposes. On the other hand, the IGS outage performance is shown via two expressions, namely, a) the derived upper bound (UB) in \eqref{p_out_UB} and, b) the exact end-to-end expression involving the numerical computation of the integral in \eqref{p_out_sr_expectation}. The IGS optimization involves two variables; $P_{\rm{r}}$ and $\mathcal{C}_x$. Hence, we consider two cases for IGS in the presented figures, namely, i) one-dimensional (1D) optimization over $\mathcal{C}_x$ while adopting maximum power allocation for $P_{\rm{r}}$, and ii) joint $P_{\rm{r}}$/$\mathcal{C}_x$ two-dimensional (2D) optimization. The optimization is done for the two aforementioned cases using both BA and GS, with the prefixes 1D and 2D to distinguish between them. We use the following parameters unless otherwise stated: $\pi_{\rm{sr}}=\pi_{\rm{rd}}=20\;\rm{dB}$, $\pi_{\rm{rr}}=10\;\rm{dB}$ and $\pi_{\rm{sd}}=3\;\rm{dB}$. The source and relay maximum power budget $P_{\rm{max}}=1\;W$ and the target rate is $R=1\;\rm{b/s/Hz}$. The source is assumed to use its maximum power budget.\\
%\textit{General notes for all examples: } 
%The bisection algorithms succeed to find the optimum solutions of the variables with a high accuracy. Examples for the number of iterations and the convergence performance of the bisection algorithms used in the following examples are omitted due to limited space. 
%Moreover, in all the following examples, we design the relay power and the circularity coefficient based on the exact expressions of the outage probability by solving the integral in \eqref{p_out_sr_expectation} numerically and perform a GS over the feasible region. As a result, we observe in all the examples, 
%thanks to the tight upper bound obtained in \eqref{p_out_sr_UB}, that the design based on the upper bound (UB) is approximately the same as of the exact expressions.       

\vspace{-4mm}\textit{Effect of RSI for different $\rm{S-R}$ link gains:}    
In Fig. \ref{ICC_EX_1}, we plot the end-to-end outage probability versus $\pi_{\rm{rr}}$ at different $\pi_{\rm{sr}}$ values. As shown, one can observe that at lower values of the RSI, the IGS solution reduces to PGS since the RSI is low and the relay can use more power without deteriorating the $\rm{S-R}$ link quality-of-service. As $\pi_{\rm{rr}}$ increases, the outage performance of the PGS is significantly deteriorated. On the other hand, the IGS design saturates at a fixed level as it can be seen from \eqref{p_out_pirr_inf}. However, this constant value of the outage probability depends on the target rate and the $\rm{S-R}$ and $\rm{R-D}$ link conditions which can be clearly noticed from the outage performance at the two values of $\pi_{\rm{sr}}$. Similar outage performance is observed for different $\pi_{\rm{rd}}$.\\     
\begin{figure}[!t]
\centering
\includegraphics[width=0.9\columnwidth]{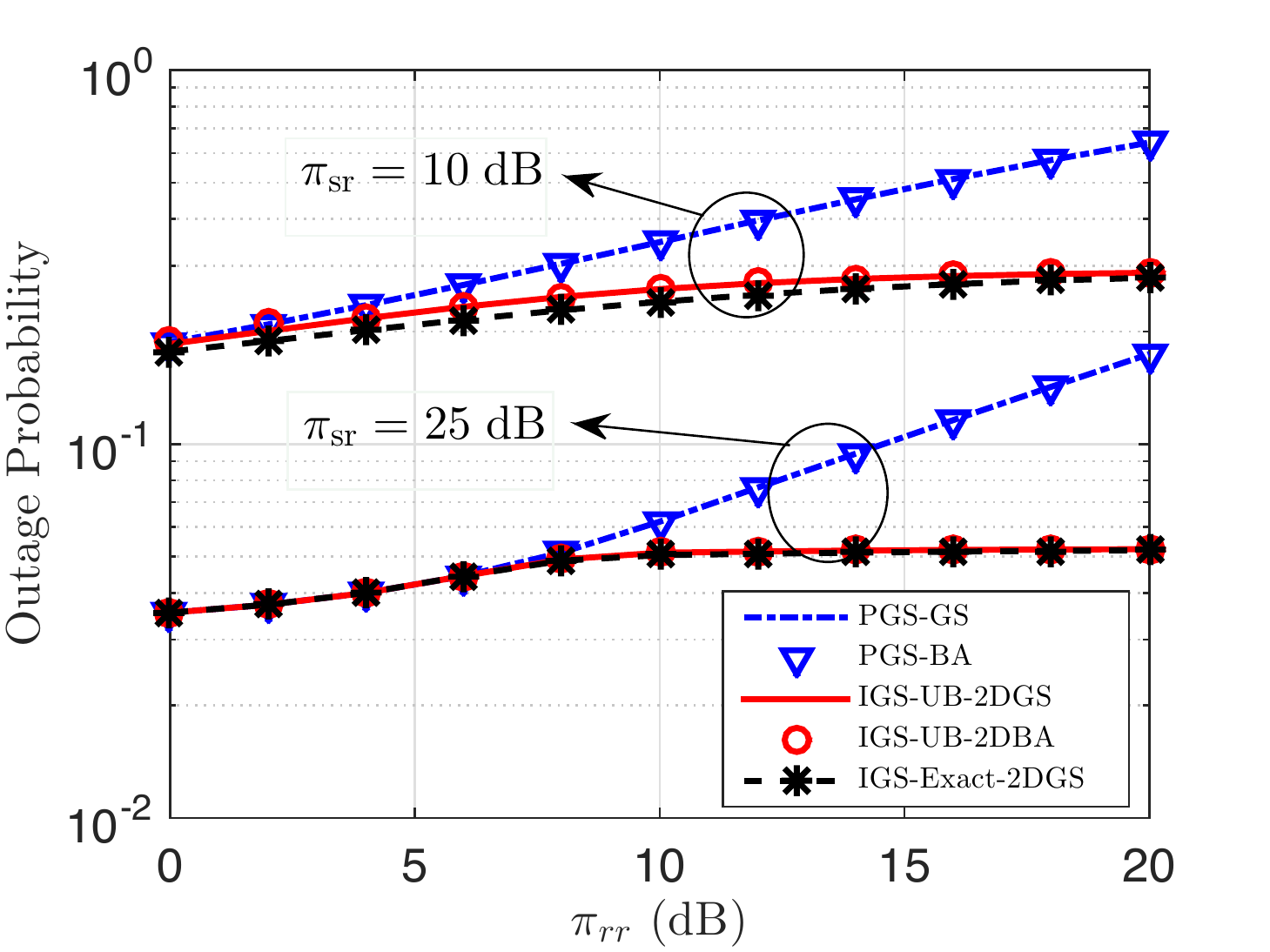}
\vspace{-1mm}
\caption{Outage probability performance vs. $\pi_{\rm{rr}}$ for different $\pi_{\rm{sr}}$ values.}
\label{ICC_EX_1}
\vspace{-3mm}
\end{figure} 

\vspace{-4mm}\textit{Effect of Allowable Relay Power Budget:}    
 In Fig. 4, we study the outage performance versus the available power budget at the relay. For FDR with PGS, and specifically when the relay transmits with its maximum power, the outage performance is known to be enhanced by increasing the allowable power only till a breakeven point as shown. This point is where the increasing adverse effect of RSI on the first hop due to higher relay power starts to exceed any performance returns due to the higher reliability of the second hop. After such a point, any increase in the relay power causes a steady increase in the end-to-end the outage probability. If relay power optimization is allowed in PGS, the performance can at best be kept constant after this breakeven point by clipping the transmit power level, rendering any further increase in the power budget unutilized. On the other hand, the performance trend is different when IGS is adopted at the relay node. Indeed, by optimizing the relay's circularity coefficient, the outage probability continues its decreasing trend. It is also observed that, unlike in PGS, the relay tends to use its maximum power in IGS when joint power/circularity optimization is considered. For high power budgets, the optimal circularity coefficient value approaches unity, denoting a maximally improper signal that tends to allocate most of its power in only one dimension of the complex signal space. This renders the worst case scenario to have the remaining orthogonal signal space dimension as self-interference-free. The decreasing trend of the outage probability in IGS, however, still shows diminishing returns due to the outage performance bottleneck in the first hop, which is primarily influenced by the first hop gain, $\pi_{\rm{sr}}$.
\begin{figure}[!t]
\centering
\includegraphics[width=0.9\columnwidth]{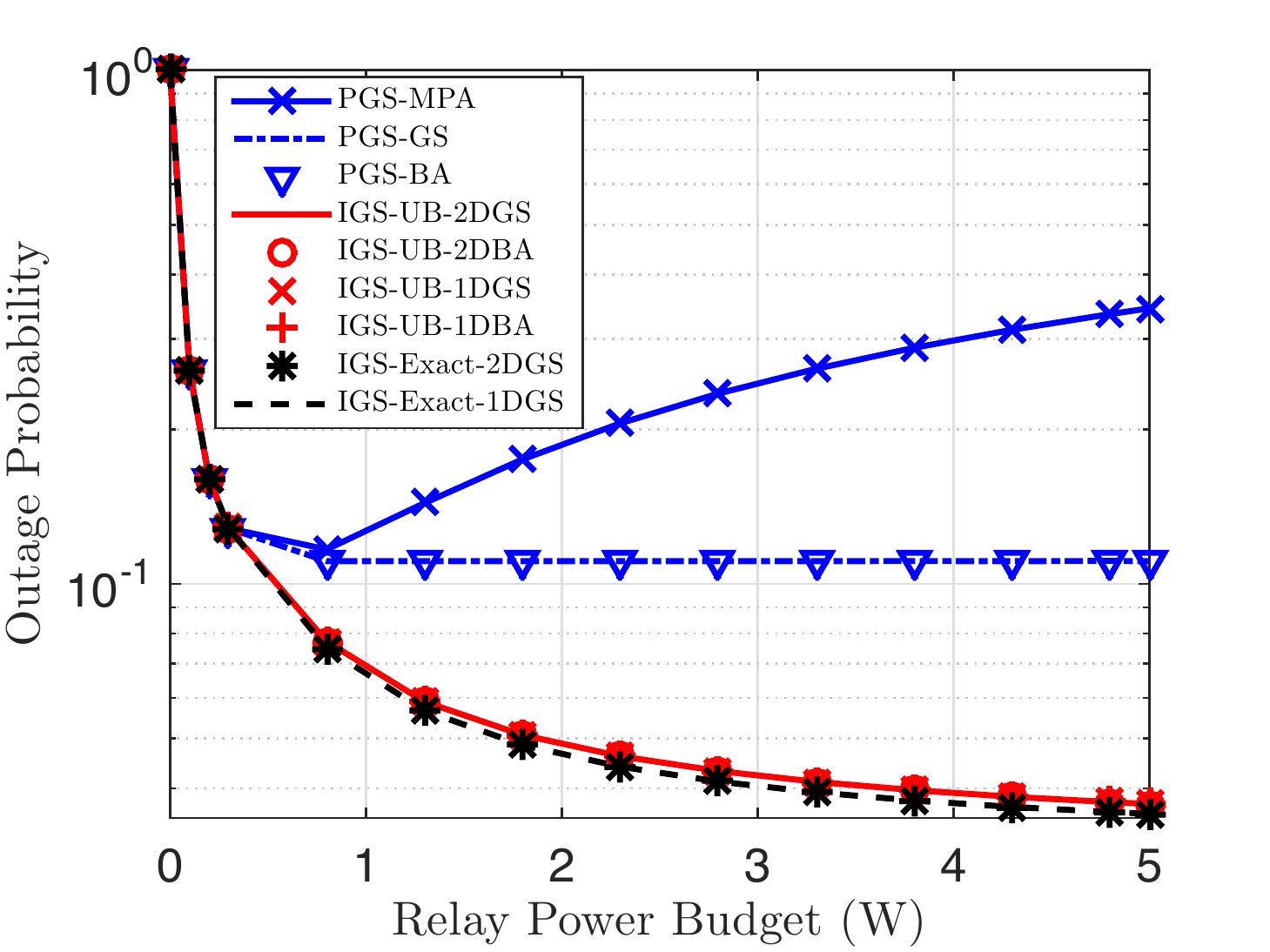}
%\vspace{-2mm}
\caption{Outage probability performance vs. relay power budget.}
\label{ICC_EX_3}
%\vspace{-4mm}
\end{figure} 

\vspace{-1mm}\textit{Effect of average $\rm{S-R}$ link gain:} In Fig. \ref{ICC_EX_2}, we plot the outage probability versus $\pi_{\rm{sr}}$ for different source target rates. First, communication fails at low $\pi_{\rm{sr}}$ values due to the first hop bottleneck, causing the outage probability of both PGS and IGS to start close to unity. As $\pi_{\rm{sr}}$ increases, using IGS enables the relay to utilize more power relative to PGS to boost the performance. At the end, when $\pi_{\rm{sr}}$ reaches a significantly higher value than the RSI, the first hop no longer operates in the interference-limited regime, and hence, the IGS merits become less significant relative to PGS. Finally, as shown, the merits of IGS over PGS are more clear as the target rate decreases. In this case, the rate requirements in the first hop become less stringent, allowing IGS to utilize higher transmit power relative to PGS and yielding a better performance.   
\begin{figure}[!t]
\centering
\includegraphics[width=0.9\columnwidth]{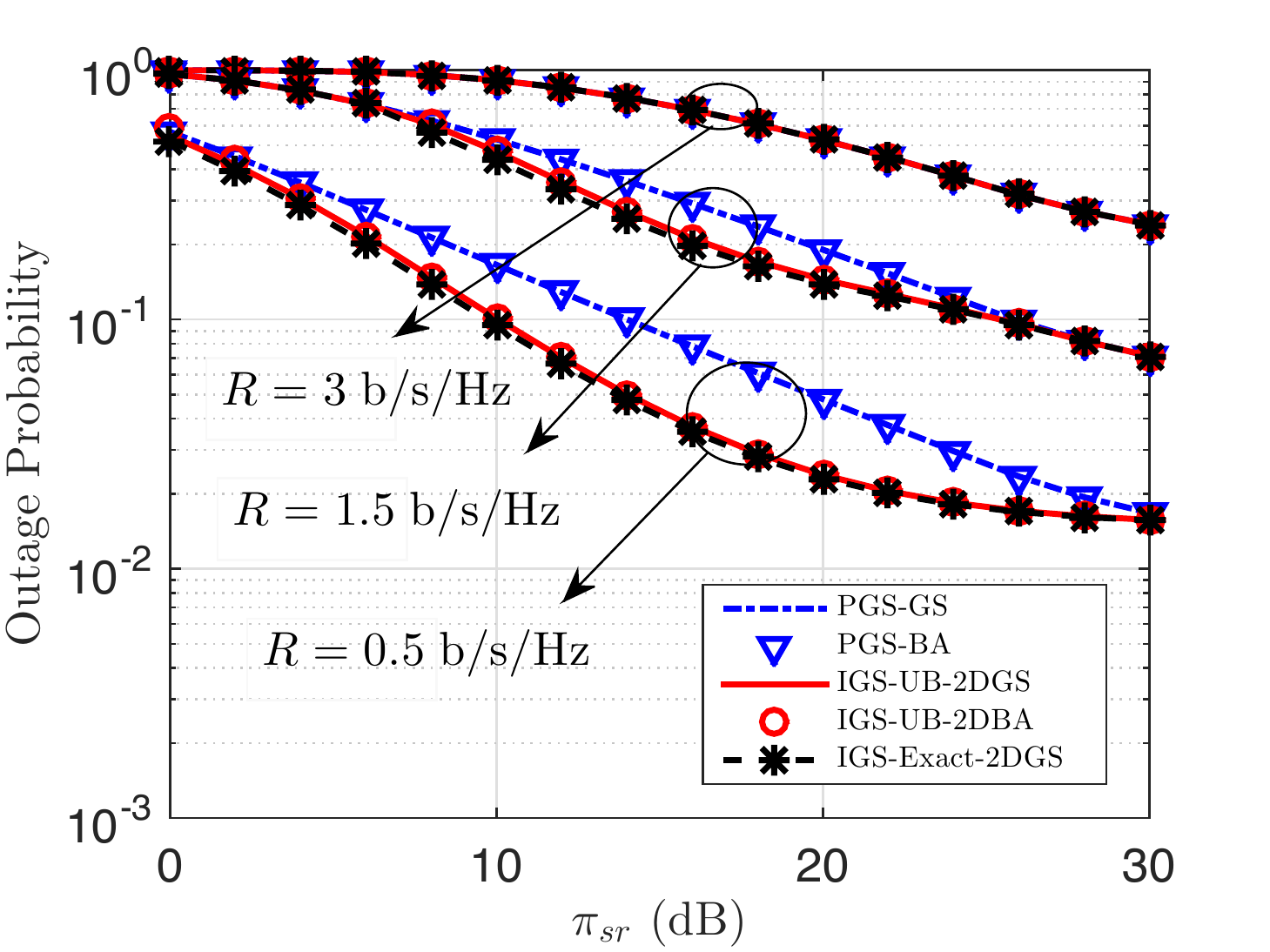}
%\vspace{-2mm}
\caption{Outage probability performance vs. $\pi_{\rm{sr}}$ for different target rates.}
\label{ICC_EX_2}
\vspace{-3mm}
\end{figure} 

\vspace{-1mm}\textit{Effect of Relay Location:}    
We study the relative relay location impact on the end-to-end outage performance for $\pi_{\mathrm{rr}} \in \{0, \; 15\} \; \mathrm{dB}$ and $R=0.5\;\rm{b/s/Hz}$. The relay location in Fig. \ref{ICC_EX_4} is defined as the normalized distance of $\rm{S-R}$ link with respect to the distance of $\rm{S-D}$ link. When the relay location is closer to the source, the $\rm{S-R}$ link gain is very strong relative to the RSI. In such a relatively self-interference-free scenario, the IGS solution reduces as expected to the PGS solution. As the relay moves towards the destination, the relative adverse effect of RSI increases, causing the first hop to operate in the interference-limited regime. In such a regime, the benefits of IGS start to show up in mitigating the adverse effect of the RSI by tuning the signal impropriety. This gives the performance improvement in the second hop, due to the higher $\rm{R-D}$ link gain, a better opportunity to enhance the end-to-end performance. When the relay is too close to the destination, the RSI effect significantly decreases due to the very low relay power required for successful communication, yielding similar IGS/PGS performance. It is clear that the benefits of IGS are noticeable only when the RSI link effect is non-negligible. When $\pi_{\mathrm{rr}}=0$ dB, i.e., at the noise level, IGS yields the PGS solution. 
%When the relay is located around the middle location between the relay and the destination, the solution reduces PGS is adopted when the interference is weak, which is obvious for $\pi_{\mathrm{rr}} = 15 \; \mathrm{dB}$. On the other hand, it is not the case for $\pi_{\mathrm{rr}} = 15 \; \mathrm{dB}$ because the RSI is significant and IGS is needed to compensate the RSI impact. As the relay moves towards the destination, IGS solution reduces to PGS at low $\pi_{\mathrm{rr}}$, while at large RSI, the relay tends to use maximally IGS to reduce the RSI impact on the $\rm{S-R}$ link. Meanwhile, it can not increase the power significantly in order to preserve the transmission quality of the $\rm{S-R}$ link.  
\begin{figure}[!t]
\centering
\includegraphics[width=0.9\columnwidth]{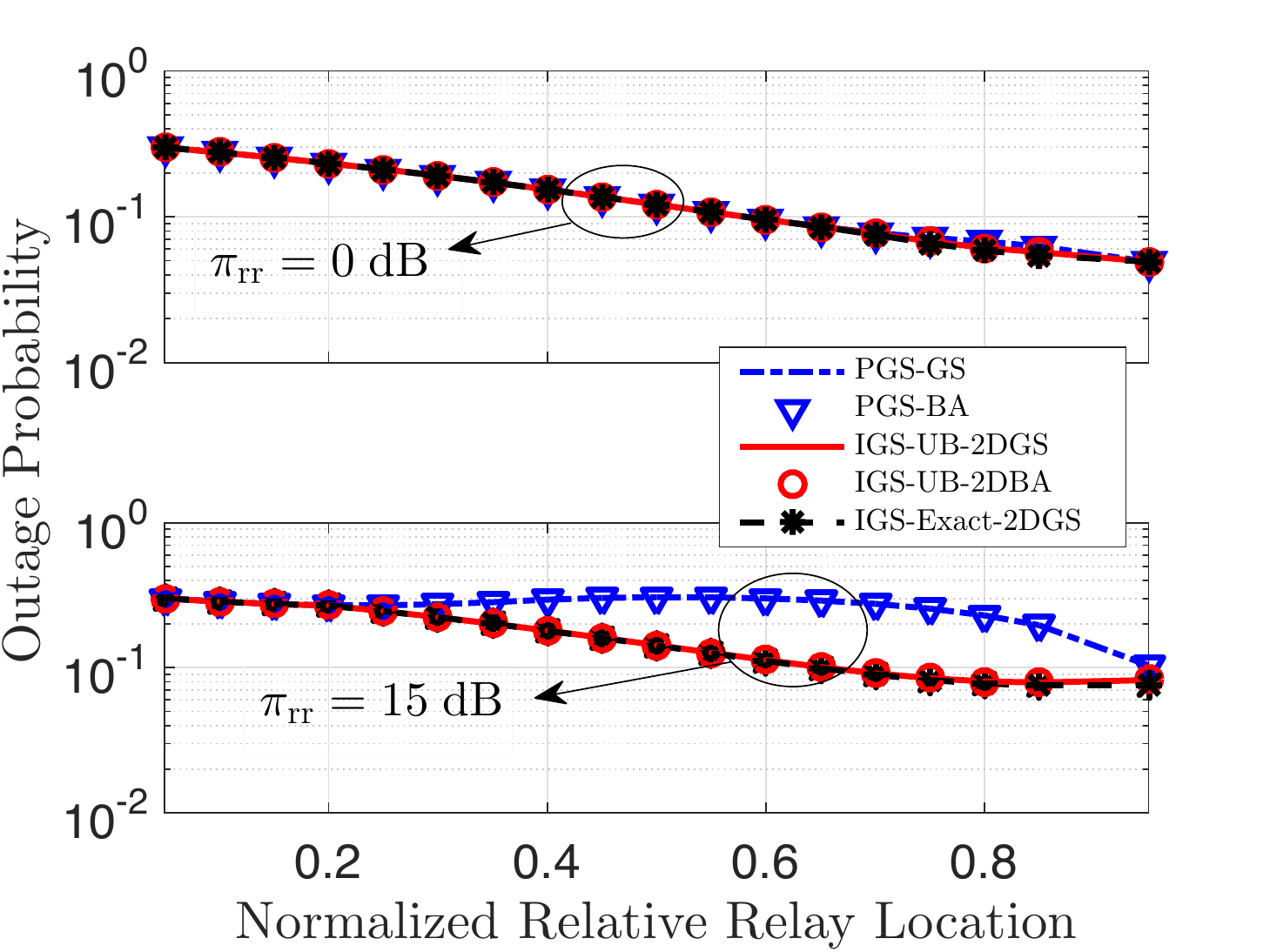}
%\vspace{-2mm}
\caption{Outage probability performance vs. normalized relay location.}
\label{ICC_EX_4}
%\vspace{-4mm}
\end{figure} 
%%%%%%%%%%%%%%%%%%%%%%%%%%%%%%%%%%%%%%%%%%%%%%%
%%%%%%%%%%%%%%%%%%         Conclusion       %%%%%%%%%%%%%%%%%%%
%%%%%%%%%%%%%%%%%%%%%%%%%%%%%%%%%%%%%%%%%%%%%%%
\section{Conclusion}\label{sec:conc}
In this work, we study the potential merits of employing improper Gaussian signaling (IGS) in full-duplex relay channels with non-negligible residual self-interference (RSI). To analyze the benefits of IGS, we derive an upper bound for the end-to-end outage probability. Interestingly, it is shown that IGS offers good immunity against RSI relative to conventional proper Gaussian signaling (PGS). Moreover, we show that, at large RSI values, IGS attains a fixed value that depends on the channel statistics and the target rate. In order to minimize the end-to-end outage probability, we numerically optimize the relay transmit power and circularity coefficient based only on the relay knowledge of the channel statistics. The paper findings show that IGS yields a promising performance over PGS. Specifically, for strong RSI, IGS tends to leverage higher power budgets to enhance the performance, while alleviating the RSI impact by tuning the relay's circularity coefficient.
%%%%%%%%%%%%%%%%%%%%%%%%%%%%%%%%%%%%%%%%%%%%%%%
%%%%%%%%%%%%%%%%%         Appendices      %%%%%%%%%%%%%%%%%%%
%%%%%%%%%%%%%%%%%%%%%%%%%%%%%%%%%%%%%%%%%%%%%%%
%
%%%%%%%%%%%%%%%%%%% Appendix A %%%%%%%%%%%%%%%%%%
\appendices
\section{Proof of Proposition \ref{prop_1}}\label{prop_1_proof}
We prove the convexity of the exponential term inside the expectation operator in \eqref{p_out_sr_expectation} by expressing it as ${{e^{ - f\left( {{g_{\rm{rr}}}} \right)}}}$. In fact, one can show easily that $f\left( {{g_{\rm{rr}}}} \right)$ can be written as 
\begin{equation}
f\left( {{g_{\rm{rr}}}} \right) = \sqrt {Ag_{\rm{rr}}^2 + B{g_{\rm{rr}}} + C}  - (D{g_{\rm{rr}}} + F).
\end{equation}
where $A=\frac{{P_{\rm{r}}}^2 \left(1+\gamma (1-{{\cal{C}}_x}^2)\right)}{{P_{\rm{s}}}^2 {\pi_{\rm{sr}}}^2}$, $B=\frac{2 (1+\gamma) {P_{\rm{r}}}}{{P_{\rm{s}}}^2 {\pi_{\rm{sr}}}^2}$, $C=\frac{(1+\gamma)}{{P_{\rm{s}}}^2 {\pi_{\rm{sr}}}^2}$, $D=\frac{{P_{\rm{r}}}}{{P_{\rm{s}}} {\pi_{\rm{sr}}}}$, and $F=\frac{1}{{P_{\rm{s}}} {\pi_{\rm{sr}}}}$ are positive. Indeed, the second derivative of $f\left( {{g_{\rm{rr}}}} \right)$ with respect to $g_{\rm{rr}}$ is
\begin{equation}
\frac{{{\partial ^2}f\left( {{g_{\rm{rr}}}} \right)}}{{\partial g_{\rm{rr}}^2}} = \frac{{4AC - {B^2}}}{{4{{\left( {C + {g_{\rm{rr}}}\left( {B + A{g_{\rm{rr}}}} \right)} \right)}^{3/2}}}} \leq 0,
\end{equation}  
since $1+\gamma(1-{{\cal{C}}_x}^2)\leq 1+\gamma$ for ${0\leq {\cal{C}}_x}\leq 1$. Hence, $f\left( {{g_{\rm{rr}}}} \right)$ is concave and ${{e^{ - f\left( {{g_{\rm{rr}}}} \right)}}}$ is convex, which concludes the proof.
%
%%%%%%%%%%%%%%%%%%% Appendix B %%%%%%%%%%%%%%%%%%
\section{Proof of Theorem \ref{theorem_unimodality}}\label{theorem_unimodality_proof}
The derived outage probability upper bound as a function of the relay's circularity coefficient is given on the form:
\begin{align}
f(x) = 1 - \frac{{{e^{ - a \frac{{\Psi \left( x \right)}}{{\left( {1 - x^2} \right)}} - b \Psi \left( c x \right)}} }}{{d \frac{{{\Psi}\left( x \right)}}{{\left( {1 - x^2} \right)}} + 1}},
\end{align}
where $0\leq x \leq 1$, $a=\frac{1}{P_{\rm{r}} \pi _{\rm{rd}}}$, $b=\frac{{{{P_{\rm{r}}}{\pi _{\rm{rr}}} + 1} }}{{{P_{\rm{s}}}{\pi _{\rm{sr}}}}}$, $c={\frac{{{P_{\rm{r}}}{\pi _{\rm{rr}}}}}{{{P_{\rm{r}}}{\pi _{\rm{rr}}} + 1}}}$, and $d=\frac{P_{\rm{s}} \pi _{\rm{sd}}}{P_{\rm{r}} \pi _{\rm{rd}}}$. We analyze the stationary points of $\overline{f}(x)=1-f(x)$. Its derivative is given by
\begin{align}
\frac{d \overline{f}(x)}{d x} = x \frac{ e^{-a \frac{\Psi\left(x\right)}{1-x^2}-b \Psi \left( c x \right)}}{\left(d\frac{\Psi\left(x\right)}{1-x^2}+1\right)^2} S(x),
\end{align}
where{
\begin{small}
\begin{eqnarray}
S(x) \!\!\!\!\!\!&=&\!\!\!\!\!\!\left(d\frac{\Psi\left(x\right)}{1-x^2}+1\right)\!\!\left(\!\frac{a \left(2 \Psi\left(x\right)+\gamma 
   \left(x^2-1\right)\right)}{(\Psi\left(x\right)+1) \left(1-x^2\right)^2}+\frac{b \gamma
    c^2}{\Psi \left( c x \right)+1}\!\right)\nonumber\\
&& +\frac{\gamma  d}{(\Psi\left(x\right)+1)
   \left(1-x^2\right)}-\frac{2 d \Psi\left(x\right)}{\left(1-x^2\right)^2}.
   \end{eqnarray}
\end{small}}From the given form, and in addition to the roots of $S(x)$, it is clear that $\frac{d \overline{f}(x)}{d x}$ admits only a zero at $x=0$. Now, we investigate the roots for $S(x)$, and use the change of variables, $z={\Psi}\left( x \right) + 2$. Hence, $1-x^2 = \frac{z (z-2)}{\gamma}$. After substitution and some manipulations, $S(z)$ is hence given for our region of interest, $2 \leq z \leq 1+ \sqrt{1+\gamma}$, by{
\begin{small}\begin{eqnarray}
S(z) =\left(d\frac{\gamma}{z}+1\right) \left(\frac{- a \gamma^2}{z^2 (z-1)}+\frac{b \gamma
    c^2}{\Psi \left( c x \right)+1}\right)-\frac{\gamma^2  d}{z^2 (z-1)}.
   \end{eqnarray}
\end{small}}Since $0<c<1$, we know that $1-c^2 x^2\geq 1-x^2$. Hence, $\Psi \left(c x\right)+1\geq\Psi \left(x\right)+1=z-1$. Let $\Psi \left(c x\right)+1=t_z(z-1)$, where $t_z\geq1$. Therefore,
\begin{eqnarray}
S(z) &=& \frac{(d \gamma+z)(- a \gamma^2 t_z + b \gamma
    c^2 z^2)-\gamma^2  d t_z z}{t_z z^3 (z-1)}.
\end{eqnarray}
The numerator is a cubic polynomial in $z$ which is given by
\begin{eqnarray}
T(z) = b c^2 \gamma z^3     + b c^2 d \gamma^2 z^2 - (a+d) \gamma^2 t_z z - a d \gamma^3 t_z.
\end{eqnarray}
To find the number of positive roots for $T(z)$, we use Descartes rule of signs \cite{prasolov2009polynomials}. Specifically, for the sequence formed by the descending order of the cubic equation coefficients, i.e., the sequence  $\{b c^2 \gamma, b c^2 d \gamma^2, - (a+d) \gamma^2 t_z, - a d \gamma^3 t_z\}$,  the number of sign changes is only one. For our real cubic polynomial, this determines the number of positive roots to be exactly one root. Hence, in the positive region of interest, $2 \leq z \leq 1 + \sqrt{1 + \gamma}$, either one or no \emph{feasible} roots exist for $T(z)$, and hence for $S(z)$. This shows that $\overline{f}(x)$ is either monotonic or unimodal due to the existence of one root at maximum in its interior. If unimodal, the global optimal point can be numerically obtained via the bisection method operating on the derivative function. 
%%%%%%%%%%%%%%%%%%%%%%%%%%%%%%%%%%%%%%%%%%%%%%%
%%%%%%%%%%%%%%%%%         Bibliography      %%%%%%%%%%%%%%%%%%%
%%%%%%%%%%%%%%%%%%%%%%%%%%%%%%%%%%%%%%%%%%%%%%%

\bibliographystyle{IEEEtran}

\bibliography{IEEEabrv,mgaafar_ref_October_2015}

\end{document}